\newcommand*\circled[1]{\tikz[baseline=(char.base)]{
            \node[shape=circle,draw,inner sep=0.25pt] (char) {#1};}}
\newtheorem{myDef}{Definition}
\newtheorem{myTheo}{Theorem}
\newtheorem{myLemma}{Lemma}
\newtheorem{myCor}{Corollary}
\begin{document}

\title{Theseus: Incentivizing Truth Discovery in Mobile Crowd Sensing Systems}
\titlenote{We gratefully acknowledge the support of National Science Foundation grants CNS-1330491, 1566374, and 1652503.}
\author{Haiming Jin}
\affiliation{%
  \institution{Department of Computer Science, University of Illinois at Urbana-Champaign, IL, USA}
}
\email{hjin8@illinois.edu}
\author{Lu Su}
\affiliation{%
  \institution{Department of Computer Science and Engineering, State University of New York at Buffalo, NY, USA}
}
\email{lusu@buffalo.edu}

\author{Klara Nahrstedt}
\affiliation{%
  \institution{Department of Computer Science, University of Illinois at Urbana-Champaign, IL, USA}
}
\email{klara@illinois.edu}

\begin{abstract}
The recent proliferation of human-carried mobile devices has given rise to mobile crowd sensing (MCS) systems that outsource sensory data collection to the public crowd. In order to identify truthful values from (crowd) workers' noisy or even conflicting sensory data, \textit{truth discovery algorithms}, which jointly estimate workers' data quality and the underlying truths through quality-aware data aggregation, have drawn significant attention. However, the power of these algorithms could not be fully unleashed in MCS systems, unless workers' \textit{strategic reduction of their sensing effort} is properly tackled. To address this issue, in this paper, we propose a \textit{payment mechanism}, named Theseus, that deals with workers' such strategic behavior, and incentivizes high-effort sensing from workers. We ensure that, at the \textit{Bayesian Nash Equilibrium} of the \textit{non-cooperative game} induced by Theseus, all participating workers will spend their \textit{maximum possible effort} on sensing, which improves their data quality. As a result, the aggregated results calculated subsequently by truth discovery algorithms based on workers' data will be highly accurate. Additionally, Theseus bears other desirable properties, including \textit{individual rationality} and \textit{budget feasibility}. We validate the desirable properties of Theseus through theoretical analysis, as well as extensive simulations.
\end{abstract}





\maketitle

\section{Introduction}

The recent proliferation of increasingly capable human-carried mobile devices (e.g., smartphones, smartwatches, smartglasses) equipped with a plethora of on-board sensors (e.g., accelerometer, compass, gyroscope, GPS, camera) has given rise to mobile crowd sensing (MCS), a new sensing paradigm which outsources sensory data collection to a crowd of participants, namely (crowd) workers. Thus far, a wide spectrum of MCS systems \cite{ShaohanTOSN15, YunSenSys14, JakobMobiSys08, RaghuMobiSys10, MTMC16} have been deployed which cover almost every aspect of our lives, including smart transportation, healthcare, environmental monitoring, indoor localization, and many others. 

In real practice, workers' sensory data are usually unreliable because of various factors (e.g., lack of effort, insufficient skill, poor sensor quality, background noise). Thus, the crowd sensing platform, which is usually a cloud-based central server, has to properly aggregate workers' noisy or even conflicting data so as to obtain accurate aggregated results. Clearly, a \textit{weighted aggregation method} that assigns higher weights to workers with more reliable data is much more favorable than naive methods (e.g., averaging and voting) that view each worker equally, in that it shifts the aggregated results towards the data provided by more reliable workers. 

The challenge, however, is that workers' reliability is usually unknown \textit{a priori} by the platform, and should be inferred from the sensory data submitted by individual workers. To address this issue, \textit{truth discovery}, which refers to a family of algorithms \cite{QiVLDB14, QiSIGMOD14, YaliangKDD15, XiaoxinTKDE08} that aim to discover meaningful facts from unreliable data, has been proposed and widely studied. Without any prior knowledge about workers' reliability, a truth discovery algorithm calculates jointly workers' weights and the aggregated results, based on the principles that the workers whose data are closer to the aggregated results will be assigned higher weights, and the data from a worker with a higher weight will be counted more in the aggregation. 

Though yielding reasonably good performance under certain circumstances, truth discovery algorithms still suffer from the limitation that the aggregation accuracy highly depends on the quality of input data. If a vast majority of the data sources are unreliable, it will be hard or even impossible for these algorithms to obtain accurate aggregated results. This is exactly why past literature on truth discovery \cite{QiVLDB14, QiSIGMOD14, YaliangKDD15, XiaoxinTKDE08} assumes that most data sources have fairly good reliability. However, in MCS systems, such assumption does not hold, as the data sources here are \textit{selfish} workers, who may \textit{strategically reduce their costly sensing effort}, such as the time, resources, attention, and carefulness they put into the sensing tasks. Clearly, the level of a worker's sensing effort is among the major factors that affect her data quality. The reduction of workers' effort inevitably deteriorates the quality of their sensory data, which further impairs the aggregation accuracy. For example, in air quality monitoring applications \cite{YunSenSys14}, in order to save effort, workers may carry their mobile devices in their pockets instead of holding them on their hands as required, which may significantly degrade the reliability of their air quality measurements. Therefore, \textit{the power of truth discovery algorithms could not be fully unleashed in MCS systems, unless the platform properly deals with workers' strategic reduction of sensing effort}.

To address this issue, in this paper, we take into consideration workers' strategic behavior, and propose a \textit{payment mechanism}, named Theseus\footnote{The name Theseus comes from incen\underline{T}ivizing trut\underline{h} discov\underline{e}ry with \underline{s}trat\underline{e}gic data so\underline{u}rce\underline{s}.}, that offers payments to incentivize high-effort sensing from workers. Our workflow of an MCS system starts with the platform announcing the Theseus payment mechanism to workers before all the sensing happens. Workers' strategic behavior after the announcement of Theseus is then modeled using \textit{game-theoretic methods}. In our model, Theseus induces a \textit{non-cooperative game}\footnote{\label{def:ncgame}Non-cooperative game refers to the family of games, where each player acts independently without collaboration or communication with others, whereas, in cooperative games, players may communicate with each other and form coalitions.}, called \textit{sensing game}, where workers are the players who strategically decide their levels of effort for sensing. In order to elicit effort from workers, Theseus is then designed such that at the \textit{Bayesian Nash Equilibrium (BNE)} of the sensing game, each participating worker maximizes her expected utility only when she spends her maximum possible effort. Clearly, Theseus improves the quality of workers' data by controlling a critical factor, that is, the level of their sensing effort. As a result, the aggregated results calculated subsequently by truth discovery algorithms based on workers' sensory data will be of high accuracy. 

In summary, this paper makes the following contributions. 
\begin{itemize}[leftmargin=*]
\item In this paper, we propose a \textit{payment mechanism}, called Theseus, which is used \textit{in pair with a truth discovery algorithm} to ensure \textit{high aggregation accuracy} in MCS systems where workers may strategically reduce their effort for sensing. 
\item Our Theseus payment mechanism deals with workers' strategic behavior by incentivizing workers to \textit{spend their maximum possible sensing effort} at the BNE of the induced non-cooperative game among them. 
\item Additionally, we ensure that Theseus bears other desirable properties, including \textit{individual rationality} and \textit{budget feasibility}. 
\end{itemize}

\section{Related Work}\label{sec:related}

In order to identify truthful values from workers' noisy or even conflicting sensory data in MCS systems, truth discovery algorithms \cite{QiVLDB14, QiSIGMOD14, YaliangKDD15, XiaoxinTKDE08}, which jointly estimate workers' data quality and the underlying truths through quality-aware data aggregation, have drawn significant attention. However, \textit{these algorithms usually cannot deal with workers' strategic reduction of sensing effort}, and thus, may yield unsatisfactory aggregation accuracy. 

Another line of prior work related to this paper is a series of incentive mechanisms \cite{IordanisINFOCOM13, RadanovicAAMAS15, RadanovicTIST16, LingjieINFOCOM12, ManHonMobiHoc15, TieINFOCOM15, YanjiaoINFOCOM16, DejunMobiCom12, HaimingINFOCOM17, ZhengniINFOCOM14, YuemingINFOCOM15, XinglinTPDS14, QiINFOCOM15, HaimingMobiHoc15, HaimingMobiHoc16, DongINFOCOM14, DongTON16, LinINFOCOM15, XiangINFOCOM15, HonggangINFOCOM16, JingICDCS16, YutianTVT15, WeinaSIGMETRICS16, WeinaAllerton15, StratisWINE13, RachelCOLT15, KaiMobiHoc16, ShuoIWQoS16, MerkouriosMobiHoc16, MerkouriosINFOCOM15, ShiboINFOCOM14, LingjunINFOCOM16, DanMobiHoc15} recently developed by the research community in order to stimulate worker participation in MCS systems. Most of these past literature \cite{IordanisINFOCOM13, RadanovicAAMAS15, RadanovicTIST16, LingjieINFOCOM12, ManHonMobiHoc15, TieINFOCOM15, YanjiaoINFOCOM16, DejunMobiCom12, HaimingINFOCOM17, ZhengniINFOCOM14, YuemingINFOCOM15, XinglinTPDS14, QiINFOCOM15, HaimingMobiHoc15, HaimingMobiHoc16, DongINFOCOM14, DongTON16, LinINFOCOM15, XiangINFOCOM15, HonggangINFOCOM16, JingICDCS16, YutianTVT15, WeinaSIGMETRICS16, WeinaAllerton15, StratisWINE13, RachelCOLT15, KaiMobiHoc16} adopts game-theoretic methods, due to their ability to deal with workers' strategic behavior. Among them, auction-based incentive mechanisms \cite{DejunMobiCom12, HaimingINFOCOM17, ZhengniINFOCOM14, YuemingINFOCOM15, XinglinTPDS14, QiINFOCOM15, HaimingMobiHoc15, HaimingMobiHoc16, DongINFOCOM14, DongTON16, LinINFOCOM15, XiangINFOCOM15, HonggangINFOCOM16, JingICDCS16, YutianTVT15} typically consider workers' strategic bidding of the prices and sensing task choices to the platform. Furthermore, some prior work \cite{WeinaSIGMETRICS16, WeinaAllerton15, StratisWINE13, RachelCOLT15} tackles workers' strategic manipulation of reported private and sensitive data due to privacy concerns. However, \textit{none of them study workers' strategic reduction of sensing effort as in this work}. Mechanisms that elicit effort from crowd workers have been investigated in past literature \cite{IordanisINFOCOM13, RadanovicAAMAS15, RadanovicTIST16, LingjieINFOCOM12, YanjiaoINFOCOM16, DejunMobiCom12, ManHonMobiHoc15,TieINFOCOM15}, but \textit{none of them is designed to work in pair with truth discovery algorithms}. Note that although one existing incentive mechanism \cite{DanMobiHoc15} is able to work jointly with truth discovery algorithms, it is not based on game-theoretic models, and thus, cannot tackle workers' strategic behavior. 

Different from existing work, in this paper, we design a payment mechanism, which is used in pair with a truth discovery algorithm to ensure high aggregation accuracy by incentivizing workers to spend their maximum possible sensing effort.

\section{Preliminaries}
In this section, we introduce the system overview, truth discovery algorithms, our game theoretic model, as well as the design objectives. 
\subsection{System Overview}\label{sec:overview}
We consider an MCS system consisting of a cloud-based platform, and a set of $S$ potential participating workers, denoted as $\mathcal{S}=\{1,2\cdots,S\}$. The platform holds a set of $M$ sensing tasks, denoted as $\mathcal{M}=\{1,2,\cdots,M\}$, and each task requires workers to sense a particular object, event, or phenomenon locally, and report to the platform the sensory data in the form of \textit{continuous values}. Such MCS systems collecting continuous data from the crowd, constitute a large portion of the currently deployed MCS systems, such as environmental monitoring applications that collect air quality or noise level measurements from participating workers. We demonstrate the interaction between the platform and workers in Figure \ref{fig:sysframe}, and describe the complete workflow of our MCS system model as follows. 

\begin{figure}[htb]
\centering
\includegraphics[width=0.4\textwidth]{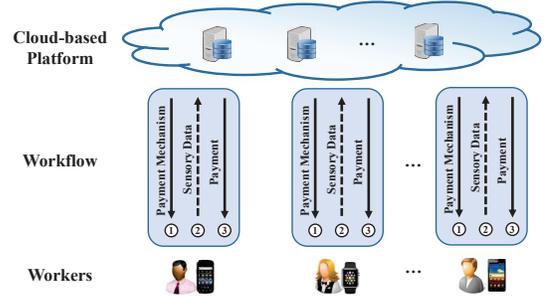}
\caption{Interaction between the platform and workers (where circled numbers represent the order of the events).}
\label{fig:sysframe}
\end{figure}

\begin{itemize}[leftmargin=*]
\item Firstly, the platform announces the set of sensing tasks $\mathcal{M}$, as well as the payment mechanism,  to the set of all potential participating workers $\mathcal{S}$ (step \circled{1}). 
\item After such announcements, each worker $s\in\mathcal{S}$ decides whether or not to participate in the sensing tasks. Then, the workers that choose to participate decide the levels of their sensing effort (e.g., time, resources, attention, carefulness), and carry out sensing according to the decided effort levels. We denote the set of participating workers as $\mathcal{S}'\subseteq\mathcal{S}$. Each worker $s\in\mathcal{S}'$ then submits to the platform the sensory data $x_m^s$ for each task $m\in\mathcal{M}$ upon completion of sensing\footnote{Clearly, in practice, each individual worker may not be able to execute all the sensing tasks hosted by the platform. Thus, a more realistic model is to introduce an affinity term for each worker-task pair $(s,m)$ that indicates whether or not worker $s$ is able to execute task $m$. However, to simplify the presentation of our subsequent mathematical analyses, we assume that each worker is capable to execute all the tasks.} (step \circled{2}).
\item After receiving workers' data, the platform pays each participating worker according to the payment calculated using the payment mechanism (step \circled{3}).
\item Finally, based on the collected data, the platform calculates an aggregated result $x_m^*$ for each task $m$, and uses it as an estimate for the ground truth $x_m^{\text{truth}}$, which is unknown to both the platform and the workers. 
\end{itemize}

As the quality of different workers' sensory data typically varies, an ideal approach is to use a weighted aggregation scheme which assigns higher weights to workers with higher data quality. However, in practice, workers' data  quality is usually unknown \textit{a priori} to the platform. Therefore, in our model, the platform utilizes one of the \textit{truth discovery} algorithms \cite{QiVLDB14, QiSIGMOD14, YaliangKDD15, XiaoxinTKDE08} to aggregate workers' data, which calculates workers' weights and estimates the ground truths in a joint manner. An introduction of such algorithms is provided in the following Section \ref{sec:td}. 

\subsection{Truth Discovery}\label{sec:td}
Although existing truth discovery algorithms \cite{QiVLDB14, QiSIGMOD14, YaliangKDD15, XiaoxinTKDE08} differ in their specific ways to calculate workers' weights and the aggregated results, their common procedure could be summarized as in the following Algorithm \ref{al:td}. 

\begin{algorithm}[h]
\small
\KwIn{Workers' data $\{x_m^s|m\in\mathcal{M}, s\in\mathcal{S}'\}$\;}
\KwOut{Estimated ground truths $\{x_m^*|m\in\mathcal{M}\}$\;}
Randomly initialize the ground truth for each task\;
\Repeat{Convergence criterion is satisfied}	
{
	\tcp{Weight calculation}
	\ForEach{$s\in\mathcal{S}'$}{
		Update the weight $w_s$ based on current estimated ground truths using Equation (\ref{eq:weight})\;
	}	
	\tcp{Truth estimation}
 	\ForEach{$m\in\mathcal{M}$}{
		Update the estimated ground truth $x_m^*$ based on workers' current weights using Equation (\ref{eq:truth})\;
	}	
}
\Return Estimated ground truths $\{x_m^*|m\in\mathcal{M}\}$\;
\caption{Truth Discovery Algorithm}\label{al:td}
\end{algorithm}

A truth discovery algorithm, as described in Algorithm \ref{al:td}, typically starts with a random guess of tasks' ground truths, and then iteratively updates workers' weights, as well as the estimated ground truths until convergence.

\textbf{Weight Calculation}. In this step, tasks' estimated ground truths are assumed to be fixed, and the weight $w_s$ of each worker $s\in\mathcal{S}'$ is calculated as
\begin{equation}\label{eq:weight}
\begin{aligned}
w_s=\omega\Bigg(\sum_{m\in\mathcal{M}}d(x_m^s,x_m^*)\Bigg),
\end{aligned}
\end{equation}
where $\omega(\cdot)$ is some monotonically decreasing function, and $d(\cdot)$ denotes the function that calculates the distance between the worker's data $x_m^s$ and the estimated ground truth $x_m^*$. Although different truth discovery algorithms may adopt different functions $\omega(\cdot)$ and $d(\cdot)$, they share the same underlying principle that higher weights are assigned to workers whose data are closer to the estimated ground truths. 

\textbf{Truth Estimation}. In this step, workers' weights are assumed to be fixed, and the estimated ground truth $x_m^*$ of each task $m$ is derived as 
\begin{equation}\label{eq:truth}
\begin{aligned}
x_m^*=\frac{\sum_{s\in\mathcal{S}'}w_s x_m^s}{\sum_{s\in\mathcal{S}'}w_s}.
\end{aligned}
\end{equation}

In such weighted aggregation method, the aggregated result $x_m^*$ relies more on the workers with higher weights. Usually, the convergence criterion is application specific. For example, the algorithm could be treated as converged as long as the difference between the estimated ground truths in two consecutive iterations is less than a threshold.

Note that the payment mechanism that we propose in this paper is independent with the specific forms of the functions $\omega(\cdot)$ and $d(\cdot)$ in Equation (\ref{eq:weight}). Therefore, it is able to work jointly with any truth discovery algorithm that shares the same procedure as Algorithm \ref{al:td}. Further discussions on this point will be provided in Section \ref{sec:formulation}.

\subsection{Game Theoretic Model}\label{sec:gamemodel}
As the aggregation accuracy of truth discovery algorithms highly depends on the quality of input data, existing work on truth discovery \cite{QiVLDB14, QiSIGMOD14, YaliangKDD15, XiaoxinTKDE08} assumes that most data sources have fairly good reliability. In MCS systems, however, such assumption does not hold, as the data sources here are usually \textit{strategic} and \textit{selfish} workers, who may reduce their sensing effort strategically, and thus, provide unreliable data. 

In this paper, we take into consideration workers' strategic behavior, and incentivize workers to provide high quality data using a \textit{payment mechanism} defined in Definition \ref{def:paymentmechanism}.

\begin{myDef}[Payment Mechanism]\label{def:paymentmechanism} A payment mechanism, denoted as $p:\mathcal{X}\rightarrow\mathbb{R}^S$, where $\mathcal{X}$ denotes the set containing all possible sets of workers' sensory data, calculates the payments to workers based on the collected set of data $\mathbf{x}=\{x_m^s|m\in\mathcal{M}, s\in\mathcal{S}'\}$. We use $p_s(\mathbf{x})\geq 0$ to denote the payment to worker $s$, when the set of collected data is $\mathbf{x}$. Note that $p_s(\mathbf{x})=0$, if worker $s$ drops out. 
\end{myDef}

As mentioned in Section \ref{sec:overview}, the platform firstly announces to workers the payment mechanism $p(\cdot)$, which then induces a \textit{non-cooperative game}\footnote{Refer to Footnote \ref{def:ncgame} for definition.}, referred to as \textit{sensing game} in the rest of this paper, where workers are the players. In this game, each worker decides whether or not to participate by evaluating her own expected utility. That is, a worker $s$ will drop out, if participation leads to a negative expected utility, and otherwise, she will participate with a specific effort level $e_s$ that maximizes her expected utility. Similar to past literature \cite{BoQDB12, QiVLDB14}, we assume that the difference between any worker $s$'s data and the ground truth follows a zero-mean Gaussian distribution, i.e., 
\begin{equation}\label{eq:gd}
\begin{aligned}
X_m^s-X_m^{\text{truth}}\sim N(0,\delta_s^2),
\end{aligned}
\end{equation}
where $X_m^s$ and $X_m^{\text{truth}}$ are the random variables corresponding to $x_m^s$ and the ground truth $x_m^{\text{truth}}$ respectively, and $N(0,\delta_s^2)$ denotes a Gaussian random variable with mean zero and standard deviation $\delta_s$. Although we assume that such difference follows a Gaussian distribution, the results in this paper could be generalized, with some adaptation, to scenarios with other types of distributions. Clearly, the standard deviation $\delta_s$ captures a worker $s$'s data quality, as the less the value of $\delta_s$, the more likely that her sensory data will be close to the ground truth.

As a worker's data quality typically increases with her effort level, we assume that $\delta_s=q_s(e_s)\in[\underline{\delta}_s, \overline{\delta}_s]$ for each worker $s$, where $q_s(\cdot)$ is a bounded monotonically non-increasing function. We allow, in our model, workers to have different $q_s(\cdot)$ functions and ranges for their $\delta_s$'s, because apart from a worker's effort, her data quality is also affected by other factors (e.g., skill level, sensor quality, environment noise). As each worker $s$ is assigned a single weight $w_s$ in the truth discovery algorithm adopted by us (Algorithm \ref{al:td}), we assume that she spends the same amount of effort $e_s$ on all the tasks. We leave the study of the scenario where workers have different effort levels on different tasks in our future work.

For simplicity, we use $\delta_s$ instead of $e_s$ as a worker $s$'s \textit{strategy}, and use $\delta_s=\bot$ to denote that the worker chooses to drop out. Thus, a worker $s$'s \textit{strategy space} is $[\underline{\delta}_s, \overline{\delta}_s]\cup\{\bot\}$. As given by Equation (\ref{eq:gd}), the distribution of any worker $s$'s data depends on $\delta_s$, we use $\mathbf{x}(\boldsymbol{\updelta})$ to denote the set of collected data, and $\mathbf{X}(\boldsymbol{\updelta})$ the random variable corresponding to $\mathbf{x}(\boldsymbol{\updelta})$, when workers' \textit{strategy profile} is $\boldsymbol{\updelta}=(\delta_1,\delta_2,\cdots,\delta_S)$. Then, we define a worker's utility in Definition \ref{def:utility}. 

\begin{myDef}[Worker's Utility]\label{def:utility} Given the payment mechanism $p(\cdot)$ and workers' strategy profile $\boldsymbol{\updelta}=(\delta_1,\delta_2,\cdots,\delta_S)$, any worker $s$'s utility is
\begin{equation}
\begin{aligned}
u_s(\boldsymbol{\updelta})=p_s\big(\mathbf{x}(\boldsymbol{\updelta})\big)-C_s(\delta_s),
\end{aligned}
\end{equation}
where $C_s(\cdot)$ is a monotonically decreasing function for $\delta_s\in[\underline{\delta}_s, \overline{\delta}_s]$, and $C_s(\bot)=0$. $C_s(\delta_s)$ denotes worker $s$'s sensing cost when her strategy is $\delta_s$. Therefore, the expected utility of worker $s$ (evaluated by worker $s$) is
\begin{equation}\label{eq:expectedutility}
\begin{aligned}\mathbb{E}_{\boldsymbol{\updelta}_{-s}}\big[u_s(\delta_s,\boldsymbol{\updelta}_{-s})\big]=\mathbb{E}_{\boldsymbol{\updelta}_{-s}}\big[p_s\big(\mathbf{X}(\delta_s,\boldsymbol{\updelta}_{-s})\big)\big]-C_s(\delta_s),
\end{aligned}
\end{equation}
where $\boldsymbol{\updelta}_{-s}=(\delta_1,\cdots,\delta_{s-1},\delta_{s+1},\cdots,\delta_S)$ denotes workers' strategy profile excluding $\delta_s$. 
\end{myDef}
In general cases, the calculation of a worker $s$'s expected utility in Equation (\ref{eq:expectedutility}) requires the knowledge of the joint distribution of $\boldsymbol{\updelta}_{-s}$. However, because of the specific design of our payment mechanism described in Section \ref{sec:mechanism}, the calculation can be done without knowing such joint distribution. We leave the detailed discussion on the required prior statistical knowledge in Section \ref{sec:Parameterization}.

\subsection{Design Objectives}

In this paper, we aim to design a payment mechanism which preserves several desirable properties at the \textit{Bayesian Nash Equilibrium (BNE)}, formally defined in Definition \ref{def:BNE}, of the sensing game. 

\begin{myDef}[BNE]\label{def:BNE} The strategy profile $\boldsymbol{\updelta}^*=(\delta_1^*,\delta_2^*,\cdots,\delta_S^*)$ is a Bayesian Nash Equilibrium (BNE) of the sensing game, if
\begin{equation}
\begin{aligned}
\mathbb{E}_{\boldsymbol{\updelta}_{-s}^*}\big[u_s(\delta_s^*,\boldsymbol{\updelta}_{-s}^*)\big]\geq\mathbb{E}_{\boldsymbol{\updelta}_{-s}^*}\big[u_s(\delta_s,\boldsymbol{\updelta}_{-s}^*)\big], \forall s\in\mathcal{S}, \delta_s,
\end{aligned}
\end{equation}
where $\boldsymbol{\updelta}_{-s}^*=(\delta_1^*,\cdots,\delta_{s-1}^*,\delta_{s+1}^*,\cdots,\delta_S^*)$.
\end{myDef}

Clearly, BNE $\boldsymbol{\updelta}^*$ satisfies that any worker $s$ maximizes her expected utility by taking strategy $\delta_s^*$ given that other workers take strategies $\boldsymbol{\updelta}_{-s}^*$. One desirable property we aim to achieve is \textit{individual rationality} defined in Definition \ref{def:ir}. 

\begin{myDef}[Individual Rationality]\label{def:ir} A payment mechanism $p(\cdot)$ is individual rational, if and only if no worker has negative expected utility at BNE $\boldsymbol{\updelta}^*$, i.e.,
\begin{equation}
\begin{aligned}
\mathbb{E}_{\boldsymbol{\updelta}_{-s}^*}\big[u_s(\delta_s^*,\boldsymbol{\updelta}_{-s}^*)\big]\geq 0, \forall s\in\mathcal{S}.
\end{aligned}
\end{equation}
\end{myDef}

The property of individual rationality is necessary for a payment mechanism, as it prevents workers from being disincentivized to participate. Because usually, in practice, the platform works under a fixed budget, another design objective considered is \textit{budget feasibility} defined in Definition \ref{def:budget}. 

\begin{myDef}[Budget Feasibility]\label{def:budget} A payment mechanism $p(\cdot)$ is budget feasible, if and only if the expected overall payment at BNE $\boldsymbol{\updelta}^*$ does not exceed the budget $B$, i.e., 
\begin{equation}
\begin{aligned}
\mathbb{E}_{\boldsymbol{\updelta}^*}\bigg[\sum_{s\in\mathcal{S}'}p_s\big(\mathbf{X}(\boldsymbol{\updelta}^*)\big)\bigg]\leq B.
\end{aligned}
\end{equation}
\end{myDef}

Another critical desirable property is that workers at BNE provide high quality data, so that the truth discovery algorithm ensures low \textit{error probability}, which is defined in Definition \ref{def:ep}.

\begin{myDef}[Error Probability]\label{def:ep} Given any $\alpha>0$, we define the error probability of a truth discovery algorithm as 
\begin{equation}
\begin{aligned}
\textnormal{Pr}\Bigg(\frac{1}{M}\sum_{m=1}^M\big|X_m^*-X_m^{\textnormal{truth}}\big|\geq\alpha\Bigg),
\end{aligned}
\end{equation}
where $X_m^*$ denotes the random variable corresponding to the estimated ground truth $x_m^*$. Clearly, it is the probability that the \textit{mean absolute error (MAE)}, $\frac{1}{M}\sum_{m=1}^M\big|X_m^*-X_m^{\textnormal{truth}}\big|$, of a truth discovery algorithm is no less than a given threshold $\alpha$. 
\end{myDef}

In summary, our objective is to design an \textit{individual rational} and \textit{budget feasible payment mechanism}, which ensures that the truth discovery algorithm guarantees \textit{low error probability at BNE}. 

\section{Mathematical Formulation}\label{sec:formulation}

In this section, we formally formulate the payment mechanism design problem mathematically. Firstly, we introduce the following Lemma \ref{theo:ub} that establishes an upper bound for the error probability of a truth discovery algorithm defined in Definition \ref{def:ep}. 

\begin{myLemma}\label{theo:ub}
Given any $\alpha>0$ and workers' strategy profile $\boldsymbol{\updelta}=(\delta_1,\delta_2,\cdots,\delta_S)$, we have that
\begin{equation}\label{eq:ub}
\begin{aligned}
\textnormal{Pr}\Bigg(\frac{1}{M}\sum_{m=1}^M\big|X_m^*-X_m^{\textnormal{truth}}\big|\geq\alpha\Bigg)\leq\sqrt{\frac{2}{\pi}}\frac{\sum_{s\in\mathcal{S}'}\delta_s}{\alpha}, 
\end{aligned}
\end{equation}
that is, the error probability of a truth discovery algorithm is upper bounded by $\sqrt{\frac{2}{\pi}}\frac{\sum_{s\in\mathcal{S}'}\delta_s}{\alpha}$.
\end{myLemma}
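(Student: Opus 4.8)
The plan is to bound the error probability via Markov's inequality and then control the resulting expectation by exploiting the weighted‑average structure of the truth discovery estimate in Equation~(\ref{eq:truth}) together with the Gaussian noise model in Equation~(\ref{eq:gd}).

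First, since $\frac{1}{M}\sum_{m=1}^M |X_m^*-X_m^{\textnormal{truth}}|$ is a non‑negative random variable, Markov's inequality and linearity of expectation give
\begin{equation*}
\textnormal{Pr}\Bigg(\frac{1}{M}\sum_{m=1}^M\big|X_m^*-X_m^{\textnormal{truth}}\big|\geq\alpha\Bigg)\leq \frac{1}{\alpha M}\sum_{m=1}^M \mathbb{E}\big[\,\big|X_m^*-X_m^{\textnormal{truth}}\big|\,\big].
\end{equation*}
It therefore suffices to prove $\mathbb{E}\big[\,\big|X_m^*-X_m^{\textnormal{truth}}\big|\,\big]\leq\sqrt{2/\pi}\,\sum_{s\in\mathcal{S}'}\delta_s$ for every task $m$, since then the factor $M$ cancels and the claimed bound follows immediately.

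Next, by Equation~(\ref{eq:truth}), $X_m^*=\sum_{s\in\mathcal{S}'}\frac{w_s}{\sum_{t\in\mathcal{S}'}w_t}X_m^s$ is a convex combination of the workers' data, so $X_m^*-X_m^{\textnormal{truth}}=\sum_{s\in\mathcal{S}'}\frac{w_s}{\sum_{t\in\mathcal{S}'}w_t}\big(X_m^s-X_m^{\textnormal{truth}}\big)$. The delicate point is that the weights $w_s$ are themselves random and statistically correlated with the data, so they cannot be pulled out of the expectation directly. The key observation is that each coefficient $w_s/\sum_{t}w_t$ lies in $[0,1]$ (the weights being non‑negative), so the triangle inequality yields
\begin{equation*}
\big|X_m^*-X_m^{\textnormal{truth}}\big|\;\leq\;\sum_{s\in\mathcal{S}'}\frac{w_s}{\sum_{t\in\mathcal{S}'}w_t}\big|X_m^s-X_m^{\textnormal{truth}}\big|\;\leq\;\sum_{s\in\mathcal{S}'}\big|X_m^s-X_m^{\textnormal{truth}}\big|,
\end{equation*}
and the right‑hand side no longer involves the weights at all.

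Finally, taking expectations and using $X_m^s-X_m^{\textnormal{truth}}\sim N(0,\delta_s^2)$, each term $\mathbb{E}\big[\,\big|X_m^s-X_m^{\textnormal{truth}}\big|\,\big]$ is the mean of a half‑normal distribution, namely $\delta_s\sqrt{2/\pi}$; summing over $s\in\mathcal{S}'$ gives the per‑task bound and hence the lemma. The only genuine obstacle is the coupling between the random weights and the data in the third step; the crude replacement of the convex combination by the full sum side‑steps it cleanly, and it is — conveniently — exactly tight enough to reproduce the stated constant $\sqrt{2/\pi}$.
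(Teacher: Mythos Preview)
Your argument is correct and follows essentially the same route as the paper's proof: both use the convex-combination structure of $X_m^*$, the triangle inequality together with $0\le w_s/\sum_t w_t\le 1$ to replace the weighted average by the unweighted sum $\sum_{s\in\mathcal{S}'}|X_m^s-X_m^{\textnormal{truth}}|$, Markov's inequality, and the half-normal mean $\sqrt{2/\pi}\,\delta_s$. The only cosmetic difference is ordering: the paper first establishes the pointwise bound on the MAE and then applies Markov to the bounding variable, whereas you apply Markov first and then bound the expectation via the same pointwise inequality; the two are equivalent.
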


\begin{proof}
The MAE of a truth discovery algorithm satisfies that 
\begin{equation*}
\begin{aligned}
\frac{1}{M}\sum_{m=1}^M\big|X_m^*-X_m^{\textnormal{truth}}\big|&=\frac{1}{M}\sum_{m=1}^M\bigg|\frac{\sum_{s\in\mathcal{S}'}w_s X_m^s}{\sum_{s\in\mathcal{S}'}w_s}-X_m^{\textnormal{truth}}\bigg|\\
&=\frac{1}{M}\sum_{m=1}^M\Bigg|\frac{\sum_{s\in\mathcal{S}'}w_s\big(X_m^s-X_m^{\textnormal{truth}}\big)}{\sum_{s\in\mathcal{S}'}w_s}\Bigg|\\
&\leq\frac{1}{M}\frac{\sum_{m=1}^M\sum_{s\in\mathcal{S}'}w_s\big|X_m^s-X_m^{\textnormal{truth}}\big|}{\sum_{s\in\mathcal{S}'}w_s}\\
&=\frac{1}{M}\frac{\sum_{s\in\mathcal{S}'}w_s\Big(\sum_{m=1}^M\big|X_m^s-X_m^{\textnormal{truth}}\big|\Big)}{\sum_{s\in\mathcal{S}'}w_s}\\
&\leq\sum_{s\in\mathcal{S}'}\frac{1}{M}\sum_{m=1}^M\big|X_m^s-X_m^{\textnormal{truth}}\big|.
\end{aligned}
\end{equation*}

As $X_m^s-X_m^{\text{truth}}\sim N(0,\delta_s^2)$, we have that $\mathbb{E}\big[\big|X_m^s-X_m^{\textnormal{truth}}\big|\big]=\sqrt{\frac{2}{\pi}}\delta_s$. Thus, given any $\alpha>0$, we have that 

\begin{equation*}
\scalefont{0.93}
\begin{aligned}
\textnormal{Pr}\Bigg(\frac{1}{M}\sum_{m=1}^M\big|X_m^*-X_m^{\textnormal{truth}}\big|\geq\alpha\Bigg)&\leq\textnormal{Pr}\Bigg(\sum_{s\in\mathcal{S}'}\frac{1}{M}\sum_{m=1}^M\big|X_m^s-X_m^{\textnormal{truth}}\big|\geq\alpha\Bigg)\\
\text{(Markov's Inequality)}&\leq\frac{\mathbb{E}\big[\sum_{s\in\mathcal{S}'}\frac{1}{M}\sum_{m=1}^M\big|X_m^s-X_m^{\textnormal{truth}}\big|\big]}{\alpha}\\
&=\frac{\sum_{s\in\mathcal{S}'}\frac{1}{M}\sum_{m=1}^M\mathbb{E}\big[\big|X_m^s-X_m^{\textnormal{truth}}\big|\big]}{\alpha}\\
&=\sqrt{\frac{2}{\pi}}\frac{\sum_{s\in\mathcal{S}'}\delta_s}{\alpha},
\end{aligned}
\end{equation*}
which is exactly Inequality (\ref{eq:ub}). 
\end{proof}

Given any fixed $\alpha$, the upper bound of the error probability of a truth discovery algorithm given by Lemma \ref{theo:ub} is proportional to $\sum_{s\in\mathcal{S}'}\delta_s$, i.e., the sum of all participating workers' $\delta_s$'s. Thus, we aim to minimize $\sum_{s\in\mathcal{S}'}\delta_s^*$ in order to get a good guarantee for the error probability at BNE $\boldsymbol{\updelta}^*=(\delta_1^*,\delta_2^*,\cdots,\delta_S^*)$. The formal mathematical formulation of the \textit{payment mechanism design (PMD)} problem is given in the following optimization program. 

\textbf{PMD Problem:}
\begin{align}
\min_{p(\cdot)\in\mathcal{P}}&\sum_{s\in\mathcal{S}'}\delta_s^*\label{eq:obj}\\
\text{s.t.}&~\mathbb{E}_{\boldsymbol{\updelta}_{-s}^*}\big[u_s(\delta_s^*,\boldsymbol{\updelta}_{-s}^*)\big]\geq 0,~\forall s\in\mathcal{S}\label{eq:ir}\\
&~\mathbb{E}_{\boldsymbol{\updelta}^*}\bigg[\sum_{s\in\mathcal{S}'}p_s\big(\mathbf{X}(\boldsymbol{\updelta}^*)\big)\bigg]\leq B\label{eq:budget}
\end{align}

\textbf{Constants.} The PMD problem takes as inputs the worker set $\mathcal{S}$, the budget $B$, as well as the set $\mathcal{P}$, which denotes the set consisting of all the possible payment mechanisms, such that a BNE exists for the corresponding sensing game.

\textbf{Variable.} The variable of the PMD problem is the payment mechanism $p(\cdot)$. Furthermore, $\boldsymbol{\updelta}^*$ denotes the BNE corresponding to $p(\cdot)$, and $\mathcal{S}'$ and $\mathbf{X}(\boldsymbol{\updelta}^*)$ denote, respectively, the set of participating workers and collected sensory data at the BNE $\boldsymbol{\updelta}^*$. Note that as $\boldsymbol{\updelta}^*$, $\delta_s^*$, $\boldsymbol{\updelta}_{-s}^*$, and $\mathcal{S}'$ in the PMD problem are determined by $p(\cdot)$, more comprehensive notations of them are $\boldsymbol{\updelta}^*(p(\cdot))$, $\delta_s^*(p(\cdot))$, $\boldsymbol{\updelta}_{-s}^*(p(\cdot))$, and $\mathcal{S}'(p(\cdot))$, respectively. For simplicity, however, we denote them as $\boldsymbol{\updelta}^*$, $\delta_s^*$, $\boldsymbol{\updelta}_{-s}^*$, and $\mathcal{S}'$ as in the PMD problem.

\textbf{Objective function.} The objective (Equation (\ref{eq:obj})) of the PMD problem is to find the payment mechanism from $\mathcal{P}$ with the minimum $\sum_{s\in\mathcal{S}'}\delta_s^*$ at the corresponding BNE $\boldsymbol{\updelta}^*$, which is equivalent to minimizing the upper bound, as derived in Lemma \ref{theo:ub}, of a truth discovery algorithm's error probability at BNE for a fixed $\alpha$.

\textbf{Constraints.} Constraint (\ref{eq:ir}) and (\ref{eq:budget}) ensure, respectively, that any feasible solution $p(\cdot)$ to the PMD problem satisfies individual rationality and budget feasibility. 

Thus, the PMD problem aims to find the individual rational and budget feasible payment mechanism, which minimizes the upper bound (given by Lemma \ref{theo:ub}) of a truth discovery algorithm's error probability at the corresponding BNE for any fixed $\alpha$. Clearly, our formulation of the PMD problem is valid for an arbitrary way of assigning workers' weights. Therefore, the above formulation and the proposed payment mechanism to be presented in the following section can be applied to any truth discovery algorithm that has the same procedure as Algorithm \ref{al:td}.

\section{Proposed Payment Mechanism}\label{sec:mechanism}

As solving directly the optimal payment mechanism is hard, in this section, we propose our own payment mechanism, named Theseus, in Algorithm \ref{al:theseus}, which approximately solves the PMD problem with good performance guarantees. 

\begin{algorithm}
\small
\KwIn{$\mathcal{M}$, $\mathcal{S}$, $\mathcal{S}'$, $\mathbf{x}$, $\{(a_s,b_s)|s\in\mathcal{S}\}$\;}
\KwOut{$\{p_s|s\in\mathcal{S}\}$\;}
\ForEach{\textnormal{worker} $s\in\mathcal{S}$}{
\eIf{$s\in\mathcal{S}'$}{\label{line:ifstart}
	Randomly pick another worker $r\in\mathcal{S}'$\;\label{line:pickr}
	$p_s\leftarrow b_s-a_s\frac{1}{M}\sum_{m=1}^M(x_m^s-x_m^{r})^2$\;\label{line:ifend}	
}{
	$p_s\leftarrow 0$\;\label{line:else}
}}
\Return$\{p_s|s\in\mathcal{S}\}$\;\label{line:returnwinner}
\caption{Theseus Payment Mechanism}\label{al:theseus}
\end{algorithm}

Algorithm \ref{al:theseus} takes as inputs the set of tasks $\mathcal{M}$, workers $\mathcal{S}$, and participating workers $\mathcal{S}'$, as well as the set of collected sensory data $\mathbf{x}$, and $\{(a_s,b_s)|s\in\mathcal{S}\}$ where $a_s$ and $b_s$ are positive parameters related to the payment to worker $s$. The calculation of the payment to any participating worker (line \ref{line:ifstart}-\ref{line:ifend}) borrows the high-level idea of the peer prediction  method \cite{NolanMS05}, which basically decides the payment based on the difference between her data and that of a randomly selected \textit{reference worker}. That is, if worker $s$ participates (i.e., $s\in\mathcal{S}'$), Algorithm \ref{al:theseus} randomly picks another reference worker $r$ from the set of participating workers $\mathcal{S}'$ (line \ref{line:pickr}). Next, the payment $p_s$ to this worker $s$ is set as 
\begin{equation}
\begin{aligned}
p_s=b_s-a_s\frac{1}{M}\sum_{m=1}^M(x_m^s-x_m^{r})^2. 
\end{aligned}
\end{equation}
Clearly, the more worker $s$'s data agrees with that of the randomly selected reference worker $r$, the higher her payment $p_s$ will be. If any worker $s$ drops out (i.e., $s\not\in\mathcal{S}'$), the algorithm will set her payment as 0 (line \ref{line:else}). Finally, the algorithm returns the set of payments to all workers $\{p_s|s\in\mathcal{S}\}$ (line \ref{line:returnwinner}). By now, our description of Theseus has been finished except for one missing piece, that is, how the parameters $\{(a_s,b_s)|s\in\mathcal{S}\}$ are set, which is presented in the following Section \ref{sec:Parameterization}. 

Clearly, another intuitive way of deciding the payment $p_s$ to each participating worker $s$ is to set $p_s$ to be positively correlated to her weight $w_s$ calculated by the truth discovery algorithm using Equation (\ref{eq:weight}). However, we do not adopt this approach due to the difficulty in analyzing the properties of the induced sensing game. 

\vspace{-0.15cm}
\section{Parameterization}\label{sec:Parameterization}
In this section, we introduce our careful selection of the parameters $\{(a_s,b_s)|s\in\mathcal{S}\}$ in order to ensure that Theseus achieves good performance. To simplify our analysis, we assume that each worker $s$'s cost function $C_s(\cdot)$ is linear in $\delta_s\in[\underline{\delta}_s, \overline{\delta}_s]$, i.e., 
\begin{equation}
\begin{aligned}
C_s(\delta_s)=-c_{s,1}\delta_s+c_{s,2},~\forall \delta_s\in[\underline{\delta}_s, \overline{\delta}_s],
\end{aligned}
\end{equation}
where $c_{s,1}$ and $c_{s,2}$ are positive parameters. Note that such selection of each worker $s$'s cost function conforms to the requirement that her cost should decrease with the increase of $\delta_s$. 

According to how much prior knowledge the platform has about workers' cost functions, we parameterize Theseus in the following two scenarios, namely the \textit{complete information scenario} where the platform knows exactly each worker $s$'s $c_{s,1}$ and $c_{s,2}$ (Section \ref{sec:complete}), as well as the \textit{incomplete information scenario} where only limited information about $c_{s,1}$ and $c_{s,2}$ is known by the platform (Section \ref{sec:incomplete}). In both scenarios, we assume that $\underline{\delta}_1,\underline{\delta}_2,\cdots,\underline{\delta}_S$, i.e, the lower bounds of workers' $\delta_s$'s, are i.i.d. random variables within the range $[\underline{\delta}, \overline{\delta}]$ with PDF $f(\cdot)$. Furthermore, the PDF $f(\cdot)$ is assumed to be \textit{a priori} known by the platform and workers, which, as will be shown in Section \ref{sec:completeanalysis} and \ref{sec:incompleteanalysis}, is the only prior statistical knowledge needed to evaluate workers' expected utilities. 

\subsection{Complete Information Scenario}\label{sec:complete}
\subsubsection{Parameter Selection}
As aforementioned, in this section, we assume that the platform knows exactly both $c_{s,1}$ and $c_{s,2}$ in each worker $s$'s cost function. Although, in practice, it might be hard for the platform to obtain such exact knowledge, the complete information scenario is still relevant and interesting to study, because it sheds light upon the philosophy of parameterizing Theseus in the incomplete information scenario in Section \ref{sec:incomplete}. For any given $\Delta_t\in[\underline{\delta}, \overline{\delta}]$, we can parameterize Theseus with any set of parameters $\{(a_s,b_s)|s\in\mathcal{S}\}$ that satisfy Condition (\ref{eq:con1})-(\ref{eq:con3}). 
\begin{numcases}{\hspace{-0.5cm}}
\hspace{-0.2cm}a_s\geq\frac{c_{s,1}}{2\underline{\delta}},&$\forall s\in\mathcal{S}$\label{eq:con1}\\
\hspace{-0.2cm}b_s=a_s\big(\Delta_t^2+A(\Delta_t)\big)-c_{s,1}\Delta_t+c_{s,2},&$\forall s\in\mathcal{S}$\label{eq:con2}\\
\hspace{-0.2cm}\sum_{s=1}^S b_s\leq B+\sum_{s=1}^S 2 a_s \underline{\delta}^2,\label{eq:con3}
\end{numcases}
where $A(\Delta_t)=\int_{\underline{\delta}}^{\Delta_t}u^2\frac{f(u)}{\int_{\underline{\delta}}^{\Delta_t}f(v)dv}du$. The criterion of selecting the additional parameter $\Delta_t$ will be discussed in Section \ref{sec:completeanalysis} as we analyze the performance guarantees of the parameter selection given by Condition (\ref{eq:con1})-(\ref{eq:con3}). For each $s\in\mathcal{S}$, as $b_s$ is exactly determined by $a_s$ due to Condition (\ref{eq:con2}), one way of  parameter selection is to choose an $a_s\geq\frac{c_{s,1}}{2\underline{\delta}}$ such that Condition (\ref{eq:con3}) is satisfied. 
\subsubsection{Analysis}\label{sec:completeanalysis} 
In this section, we carry out analyses about the desirable properties of Theseus by parameterizing it according to Condition (\ref{eq:con1})-(\ref{eq:con3}). We derive the BNE of the sensing game that corresponds to such parameterization in the following Theorem \ref{theo:bne}.
\begin{myTheo}\label{theo:bne}
If parameters $\{(a_s,b_s)|s\in\mathcal{S}\}$ satisfy Condition (\ref{eq:con1}) and (\ref{eq:con2}) , we have that $\boldsymbol{\updelta}^*=(\delta_1^*, \delta_2^*,\cdots, \delta_S^*)$, where, for each worker $s\in\mathcal{S}$,
\begin{equation}
\delta_s^*=
\begin{cases}
\begin{aligned}
&\bot,&&\textnormal{if~}\underline{\delta}_s>\Delta_t\\
&\underline{\delta}_s,&&\textnormal{if~}\underline{\delta}_s\leq\Delta_t
\end{aligned},
\end{cases}
\end{equation}
is a BNE of the sensing game in the complete information scenario. 
\end{myTheo}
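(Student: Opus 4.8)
The plan is to verify the defining inequality of a BNE (Definition~\ref{def:BNE}) directly. Fix an arbitrary worker $s$ with realized lower bound $\underline{\delta}_s$, assume every other worker follows the conjectured profile $\boldsymbol{\updelta}_{-s}^*$, and show that $\delta_s^*$ as in the statement is a best response for $s$. The first step is to evaluate $s$'s expected payment when she participates with an arbitrary $\delta_s\in[\underline{\delta}_s,\overline{\delta}_s]$. Writing $X_m^s-X_m^r=(X_m^s-X_m^{\textnormal{truth}})-(X_m^r-X_m^{\textnormal{truth}})$ and using Equation~(\ref{eq:gd}) together with independence of distinct workers' sensing noise, $X_m^s-X_m^r$ is zero-mean with variance $\delta_s^2+\delta_r^2$, hence $\mathbb{E}[(X_m^s-X_m^r)^2]=\delta_s^2+\delta_r^2$; averaging over $m$ and over the uniformly chosen reference worker $r\in\mathcal{S}'\setminus\{s\}$ gives $\mathbb{E}_{\boldsymbol{\updelta}_{-s}^*}[p_s(\mathbf{X}(\delta_s,\boldsymbol{\updelta}_{-s}^*))]=b_s-a_s\delta_s^2-a_s\,\mathbb{E}_{\boldsymbol{\updelta}_{-s}^*}[\delta_r^2]$, where the last expectation does not involve $\delta_s$.

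Second, I would substitute the linear cost $C_s(\delta_s)=-c_{s,1}\delta_s+c_{s,2}$ into Equation~(\ref{eq:expectedutility}) to obtain that, conditional on participating, $s$'s expected utility equals $b_s-a_s\delta_s^2+c_{s,1}\delta_s-c_{s,2}-a_s\,\mathbb{E}_{\boldsymbol{\updelta}_{-s}^*}[\delta_r^2]$, a strictly concave quadratic in $\delta_s$ with vertex at $\delta_s=c_{s,1}/(2a_s)$. Condition~(\ref{eq:con1}) gives $c_{s,1}/(2a_s)\le\underline{\delta}\le\underline{\delta}_s$, so the quadratic is decreasing on $[\underline{\delta}_s,\overline{\delta}_s]$ and the best participating strategy is $\delta_s=\underline{\delta}_s$, regardless of the others' play. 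It then remains to compare the resulting utility with $0$, the utility of dropping out ($p_s=0$, $C_s(\bot)=0$).

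Third, I would evaluate $\mathbb{E}_{\boldsymbol{\updelta}_{-s}^*}[\delta_r^2]$ under the conjectured profile: a worker $r$ participates precisely when $\underline{\delta}_r\le\Delta_t$ and then plays $\delta_r=\underline{\delta}_r$, so, by the i.i.d. assumption on the $\underline{\delta}_r$'s, the type of a uniformly chosen participating reference worker has density $f(u)/\int_{\underline{\delta}}^{\Delta_t}f(v)\,dv$ on $[\underline{\delta},\Delta_t]$; hence $\mathbb{E}_{\boldsymbol{\updelta}_{-s}^*}[\delta_r^2]=A(\Delta_t)$. Plugging this and Condition~(\ref{eq:con2}) for $b_s$ into the expression from the first two steps, the expected utility of participating at $\underline{\delta}_s$ simplifies to $(\Delta_t-\underline{\delta}_s)\big(a_s(\Delta_t+\underline{\delta}_s)-c_{s,1}\big)$. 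Since $\Delta_t\ge\underline{\delta}$ and $\underline{\delta}_s\ge\underline{\delta}$, Condition~(\ref{eq:con1}) yields $a_s(\Delta_t+\underline{\delta}_s)\ge a_s\cdot 2\underline{\delta}\ge c_{s,1}$, so the second factor is nonnegative; therefore this utility is $\ge 0$ iff $\underline{\delta}_s\le\Delta_t$ and $<0$ iff $\underline{\delta}_s>\Delta_t$. Combined with step two: when $\underline{\delta}_s\le\Delta_t$, participating at $\underline{\delta}_s$ is optimal, and when $\underline{\delta}_s>\Delta_t$, dropping out is optimal — exactly the profile in the statement. As $s$ was arbitrary, $\boldsymbol{\updelta}^*$ is a BNE.

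The main obstacle I anticipate is the bookkeeping in the third step: correctly identifying, under the conjectured equilibrium, the Bayesian distribution of the randomly selected reference worker's realized strategy — namely that it is the truncation of $f$ to $[\underline{\delta},\Delta_t]$, so that the ``interference'' term equals precisely $A(\Delta_t)$ — and confirming that this term is independent of $\delta_s$, which is what makes the single-variable concave maximization in steps one and two legitimate. One should also record the implicit requirement that at least two workers have $\underline{\delta}_s\le\Delta_t$, so that a reference worker exists; I would treat this as a standing assumption. The remaining ingredients — a one-dimensional concave maximization and the algebraic simplification — are routine.
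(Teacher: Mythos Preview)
Your proposal is correct and follows essentially the same approach as the paper: compute the expected payment via $\mathbb{E}[(X_m^s-X_m^r)^2]=\delta_s^2+\delta_r^2$, use Condition~(\ref{eq:con1}) to conclude the participating best response is $\underline{\delta}_s$, then identify $\mathbb{E}[\delta_r^2]=A(\Delta_t)$ under the conjectured profile and substitute Condition~(\ref{eq:con2}) to determine the sign of the participation utility. Your explicit factorization $(\Delta_t-\underline{\delta}_s)\big(a_s(\Delta_t+\underline{\delta}_s)-c_{s,1}\big)$ makes the sign analysis cleaner than the paper's presentation, and your remark about needing at least two participants is a caveat the paper leaves implicit.
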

\begin{proof}
If any worker $s$ chooses to participate, her expected utility, when other workers take strategies $\boldsymbol{\updelta}^*_{-s}$, and her reference worker $r$'s strategy $\delta_r^*$ is given, can be calculated as 
\begin{equation*} 
\begin{aligned}
\mathbb{E}\big[u_s(\delta_s,\boldsymbol{\updelta}_{-s}^*)\big|\delta_r^*\big]=&\mathbb{E}\big[p_s\big(\mathbf{X}(\delta_s,\boldsymbol{\updelta}_{-s}^*)\big)\big|\delta_r^*\big]-C_s(\delta_s)\\
=&b_s-a_s\mathbb{E}\Bigg[\frac{1}{M}\sum_{m=1}^M\big(X_m^s-X_m^r\big)^2\bigg|\delta_r^*\Bigg]\\
&+c_{s,1}\delta_s-c_{s,2}.\\
\end{aligned}
\end{equation*} 
As $X_m^s-X_m^r=\big(X^{\text{truth}}+N(0,\delta_s^2)\big)-\big(X^{\text{truth}}+N(0,\delta_r^2)\big)=N(0,\delta_s^2)-N(0,\delta_r^2)$,  we have $\mathbb{E}\big[(X_m^s-X_m^r)^2\big]=\delta_s^2+\delta_r^2$. Therefore, we have 
\begin{equation*} 
\begin{aligned}
\mathbb{E}\big[u_s(\delta_s,\boldsymbol{\updelta}_{-s}^*)\big|\delta_r^*\big]=b_s-a_s\big(\delta_s^2+(\delta_r^{*})^2\big)+c_{s,1}\delta_s-c_{s,2}, 
\end{aligned}
\end{equation*}
and thus,
\begin{equation*}
\begin{aligned}
\max\Big\{\frac{c_{s,1}}{2a_s}, \underline{\delta}_s\Big\}=\arg\max_{\delta_s\in[\underline{\delta}_s, \overline{\delta}_s]}\mathbb{E}\big[u_s(\delta_s,\boldsymbol{\updelta}_{-s}^*)\big|\delta_r^*\big].
\end{aligned}
\end{equation*}
That is, regardless of the value of $\delta_r^*$, the strategy $\delta_s\in[\underline{\delta}_s, \overline{\delta}_s]$ that maximizes $\mathbb{E}\big[u_s(\delta_s,\boldsymbol{\updelta}_{-s}^*)\big|\delta_r^*\big]$ is the maximum between $\frac{c_{s,1}}{2a_s}$ and $\underline{\delta}_s$. Because of Condition (\ref{eq:con1}), we have that $\underline{\delta}_s\geq\underline{\delta}\geq\frac{c_{s,1}}{2a_s}$. Therefore, if any worker $s$ chooses to participate, her strategy must be $\underline{\delta}_s$, and thus, her expected utility is
\begin{equation*} 
\scalefont{0.9}
\begin{aligned}
\mathbb{E}\big[u_s(\underline{\delta}_s,\boldsymbol{\updelta}_{-s}^*)\big]=&\mathbb{E}_{\delta_r^*}\Big[\mathbb{E}\big[u_s(\underline{\delta}_s,\boldsymbol{\updelta}_{-s}^*)\big|\delta_r^*\big]\Big]=\mathbb{E}_{\underline{\delta}_r}\Big[\mathbb{E}\big[u_s(\underline{\delta}_s,\boldsymbol{\updelta}_{-s}^*)\big|\underline{\delta}_r\big]\Big]\\
=&\mathbb{E}_{\underline{\delta}_r}\big[b_s-a_s\big(\underline{\delta}_s^2+\underline{\delta}_r^2\big)+c_{s,1}\underline{\delta}_s-c_{s,2}\big]\\
=&b_s-a_s\Big(\underline{\delta}_s^2+\mathbb{E}_{\underline{\delta}_r}\big[\underline{\delta}_r^2\big|\underline{\delta}_r\leq\Delta_t\big]\Big)+c_{s,1}\underline{\delta}_s-c_{s,2}\\
=&\Big(a_s\big(\Delta_t^2+A(\Delta_t)\big)-c_{s,1}\Delta_t+c_{s,2}\Big)-\\
&\Big(a_s\big(\underline{\delta}_s^2+A(\Delta_t)\big)-c_{s,1}\underline{\delta}_s+c_{s,2}\Big), 
\end{aligned}
\end{equation*} 
where $A(\Delta_t)=\int_{\underline{\delta}}^{\Delta_t}u^2\frac{f(u)}{\int_{\underline{\delta}}^{\Delta_t}f(v)dv}du$, and the last equality is due to Condition (\ref{eq:con2}). Therefore, we have that for each worker $s\in\mathcal{S}$
\begin{equation*}
\begin{cases}
\begin{aligned}
&\mathbb{E}\big[u_s(\underline{\delta}_s,\boldsymbol{\updelta}_{-s}^*)\big]
<0,&&\textnormal{if~}\underline{\delta}_s>\Delta_t\\
&\mathbb{E}\big[u_s(\underline{\delta}_s,\boldsymbol{\updelta}_{-s}^*)\big]
\geq 0,&&\textnormal{if~}\underline{\delta}_s\leq\Delta_t
\end{aligned},
\end{cases}
\end{equation*}
and thus, given that other workers take the strategies $\boldsymbol{\updelta}_{-s}^*$, worker $s$ will drop out, if $\underline{\delta}_s>\Delta_t$, and will take strategy $\underline{\delta}_s$, if $\underline{\delta}_s\leq\Delta_t$. Hence, the strategy profile $\boldsymbol{\updelta}^*$ given in Theorem \ref{theo:bne} is a BNE of the sensing game. 
\end{proof}

Theorem \ref{theo:bne} gives us a BNE of the sensing game, where every worker $s$ with $\underline{\delta}_s>\Delta_t$ will voluntarily drop out, and as long as $\underline{\delta}_s\leq\Delta_t$, the worker $s$ will participate with strategy $\underline{\delta}_s$, which is exactly the smallest standard deviation of the difference between her data and the ground truths. That is, by satisfying Condition (\ref{eq:con1}) and (\ref{eq:con2}), Theseus will only incentivize workers who potentially is capable of providing high quality data to participate, and those who choose to participate will exert their maximum amount of effort, leading them to provide reliable data. Note that there might be multiple BNEs for the sensing game. However, to the best of our knowledge, we have not found other BNEs except for the one given in Theorem \ref{theo:bne}, on which our further analyses in this section are based. We leave the derivation of other BNEs or the proof of the uniqueness of BNE in our future work. Next, we prove in the following Theorem \ref{theo:budget} that Theseus satisfies budget feasibility in the complete information scenario by satisfying Condition (\ref{eq:con3}). 

\begin{myTheo}\label{theo:budget}
Condition (\ref{eq:con3}) ensures that Theseus is budget feasible in the complete information scenario. 
\end{myTheo}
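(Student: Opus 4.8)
The plan is to establish the budget inequality pointwise — for each realization of the workers' quality lower bounds — and then average over those realizations. Fix a realization of $\underline{\delta}_1,\dots,\underline{\delta}_S$. Since the parameterization of Section~\ref{sec:complete} satisfies Conditions~(\ref{eq:con1}) and~(\ref{eq:con2}), Theorem~\ref{theo:bne} applies: at the relevant BNE $\boldsymbol{\updelta}^*$ the participating set is $\mathcal{S}'=\{s\in\mathcal{S}:\underline{\delta}_s\le\Delta_t\}$ and $\delta_s^*=\underline{\delta}_s$ for every $s\in\mathcal{S}'$, both now determined by the realization. Conditioning on it, the only remaining randomness in the payment $p_s$ to a participating worker $s$ comes from the zero-mean Gaussian sensing noise (which randomizes both her data and her reference worker's data) and the uniform draw of the reference worker $r\in\mathcal{S}'$, so it suffices to bound $\sum_{s\in\mathcal{S}'}\mathbb{E}[p_s\mid\text{realization}]$ and show it is at most $B$.

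For the core estimate I would reuse the computation from the proof of Theorem~\ref{theo:bne}: writing $X_m^s-X_m^r=N(0,\delta_s^2)-N(0,\delta_r^2)$ gives $\mathbb{E}[(X_m^s-X_m^r)^2\mid\delta_s,\delta_r]=\delta_s^2+\delta_r^2$. At the BNE $\delta_s=\underline{\delta}_s$, and the (uniformly random) reference worker satisfies $\delta_r=\underline{\delta}_r\in[\underline{\delta},\Delta_t]$; hence for every $s\in\mathcal{S}'$,
\begin{equation*}
\mathbb{E}\big[p_s\mid\text{realization}\big]=b_s-a_s\Big(\underline{\delta}_s^2+\mathbb{E}_r\big[\underline{\delta}_r^2\big]\Big)\le b_s-2a_s\underline{\delta}^2,
\end{equation*}
using $a_s>0$ together with $\underline{\delta}_s\ge\underline{\delta}$ and $\underline{\delta}_r\ge\underline{\delta}$.

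Summing over $s\in\mathcal{S}'$ and then enlarging the index set to all of $\mathcal{S}$ gives
\begin{equation*}
\mathbb{E}\bigg[\sum_{s\in\mathcal{S}'}p_s\ \Big|\ \text{realization}\bigg]\le\sum_{s\in\mathcal{S}'}\big(b_s-2a_s\underline{\delta}^2\big)\le\sum_{s=1}^S\big(b_s-2a_s\underline{\delta}^2\big)=\sum_{s=1}^S b_s-\sum_{s=1}^S 2a_s\underline{\delta}^2\le B,
\end{equation*}
the last inequality being exactly Condition~(\ref{eq:con3}). Since $B$ is independent of the realization, taking the outer expectation over $\underline{\delta}_1,\dots,\underline{\delta}_S$ yields $\mathbb{E}_{\boldsymbol{\updelta}^*}\big[\sum_{s\in\mathcal{S}'}p_s(\mathbf{X}(\boldsymbol{\updelta}^*))\big]\le B$, i.e.\ budget feasibility.

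The step I expect to be the main obstacle is the middle inequality $\sum_{s\in\mathcal{S}'}(b_s-2a_s\underline{\delta}^2)\le\sum_{s\in\mathcal{S}}(b_s-2a_s\underline{\delta}^2)$, which requires $b_s-2a_s\underline{\delta}^2\ge0$ for every non-participant $s\in\mathcal{S}\setminus\mathcal{S}'$. I would argue this termwise: such an $s$ has $\underline{\delta}_s>\Delta_t$, hence $\overline{\delta}_s\ge\underline{\delta}_s>\Delta_t$, and since a sensing cost is nonnegative, $C_s(\overline{\delta}_s)=c_{s,2}-c_{s,1}\overline{\delta}_s\ge0$, so $c_{s,2}-c_{s,1}\Delta_t>0$; combining this with Condition~(\ref{eq:con2}), with $\Delta_t\ge\underline{\delta}$, and with $A(\Delta_t)\ge\underline{\delta}^2$ (it is a conditional second moment of a quantity that is at least $\underline{\delta}$) gives $b_s-2a_s\underline{\delta}^2=a_s\big(\Delta_t^2+A(\Delta_t)-2\underline{\delta}^2\big)+\big(c_{s,2}-c_{s,1}\Delta_t\big)>0$. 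Everything else is routine bookkeeping of expectations over the three independent sources of randomness: the i.i.d.\ quality lower bounds, the reference-worker draw, and the Gaussian sensing noise.
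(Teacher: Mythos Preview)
Your proof is correct and follows essentially the same route as the paper's: compute $\mathbb{E}[p_s]=b_s-a_s(\underline{\delta}_s^2+\mathbb{E}[\underline{\delta}_r^2])$ at the BNE of Theorem~\ref{theo:bne}, bound each term below by $b_s-2a_s\underline{\delta}^2$, enlarge the sum from $\mathcal{S}'$ to $\mathcal{S}$, and invoke Condition~(\ref{eq:con3}). The paper does exactly this in a single chain of inequalities, though it takes the expectation over $\underline{\delta}_r$ via the conditional second moment $\mathbb{E}[\underline{\delta}_{r_s}^2\mid\underline{\delta}_{r_s}\le\Delta_t]$ rather than conditioning on a full realization first.

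Where you are actually more careful than the paper is the enlargement step $\sum_{s\in\mathcal{S}'}(b_s-2a_s\underline{\delta}^2)\le\sum_{s=1}^S(b_s-2a_s\underline{\delta}^2)$: the paper simply writes this inequality without comment, whereas you correctly observe that it needs $b_s-2a_s\underline{\delta}^2\ge 0$ for each non-participant and supply a termwise argument via Condition~(\ref{eq:con2}), $A(\Delta_t)\ge\underline{\delta}^2$, and nonnegativity of the sensing cost. That extra paragraph is a genuine improvement over the paper's presentation, not a deviation from its strategy.
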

\begin{proof}
At the BNE $\boldsymbol{\updelta}^*$ given in Theorem \ref{theo:bne}, the expected overall payment satisfies that 
\begin{equation*}
\scalefont{0.82}
\begin{aligned}
\mathbb{E}\Bigg[\sum_{s\in\mathcal{S}'}p_s\big(\mathbf{X}(\boldsymbol{\updelta}^*)\big)\Bigg]&=\sum_{s\in\mathcal{S}'}\mathbb{E}\big[p_s\big(\mathbf{X}(\boldsymbol{\updelta}^*)\big)\big]\\
&=\sum_{s\in\mathcal{S}'}\Big(-a_s\Big(\underline{\delta}_s^2+\mathbb{E}\big[\underline{\delta}_{r_s}^2\big|\underline{\delta}_{r_s}\leq\Delta_t\big]\Big)+b_s\Big)\\
&\leq\sum_{s=1}^S\big(-2a_s\underline{\delta}^2+b_s\big)\leq\sum_{s=1}^S\big(-2a_s\underline{\delta}^2+2a_s\underline{\delta}^2\big)+B=B,
\end{aligned}
\end{equation*}
where the last inequality is because of Condition (\ref{eq:con3}), which exactly proves that Theseus is budget feasible in the complete information scenario. 
\end{proof}

Clearly, as stated in the following Theorem \ref{theo:ir}, Theseus satisfies individual rationality in the complete information scenario. 

\begin{myTheo}\label{theo:ir}
Theseus is individual rational in the complete information scenario. 
\end{myTheo}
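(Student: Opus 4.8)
The plan is to read off individual rationality directly from the BNE characterisation in Theorem \ref{theo:bne} together with the expected-utility computation already carried out in its proof. Individual rationality (Definition \ref{def:ir}) only requires that $\mathbb{E}_{\boldsymbol{\updelta}_{-s}^*}\big[u_s(\delta_s^*,\boldsymbol{\updelta}_{-s}^*)\big]\geq 0$ for every worker $s$ at the specific BNE $\boldsymbol{\updelta}^*$ exhibited in Theorem \ref{theo:bne}, so I would split on the two cases that define $\delta_s^*$ there.

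First, consider a worker $s$ with $\underline{\delta}_s>\Delta_t$. At $\boldsymbol{\updelta}^*$ her strategy is $\delta_s^*=\bot$, i.e.\ she drops out. By Definition \ref{def:paymentmechanism} her payment is then $0$, and by Definition \ref{def:utility} her cost is $C_s(\bot)=0$, so $u_s(\bot,\boldsymbol{\updelta}_{-s}^*)=0$ deterministically, and its expectation is $0\geq 0$. Second, consider a worker $s$ with $\underline{\delta}_s\leq\Delta_t$, whose BNE strategy is $\delta_s^*=\underline{\delta}_s$. The proof of Theorem \ref{theo:bne} already evaluated $\mathbb{E}\big[u_s(\underline{\delta}_s,\boldsymbol{\updelta}_{-s}^*)\big]$ and, using Condition (\ref{eq:con2}), rewrote it as $\big(a_s(\Delta_t^2+A(\Delta_t))-c_{s,1}\Delta_t+c_{s,2}\big)-\big(a_s(\underline{\delta}_s^2+A(\Delta_t))-c_{s,1}\underline{\delta}_s+c_{s,2}\big)$, which collapses to $(\Delta_t-\underline{\delta}_s)\big(a_s(\Delta_t+\underline{\delta}_s)-c_{s,1}\big)$. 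I would then note that both factors are nonnegative: $\Delta_t-\underline{\delta}_s\geq 0$ by the case assumption, and $a_s(\Delta_t+\underline{\delta}_s)\geq 2a_s\underline{\delta}_s\geq 2a_s\underline{\delta}\geq c_{s,1}$ using $\underline{\delta}_s\geq\underline{\delta}$ together with Condition (\ref{eq:con1}). Hence $\mathbb{E}_{\boldsymbol{\updelta}_{-s}^*}\big[u_s(\delta_s^*,\boldsymbol{\updelta}_{-s}^*)\big]\geq 0$ in this case too, giving the claim for every $s\in\mathcal{S}$.

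There is essentially no obstacle here: the statement is a corollary of work already done — indeed the inequality $\mathbb{E}\big[u_s(\underline{\delta}_s,\boldsymbol{\updelta}_{-s}^*)\big]\geq 0$ for $\underline{\delta}_s\leq\Delta_t$ was already asserted at the end of the proof of Theorem \ref{theo:bne}. The only points needing a little care are to make sure the expectation is taken at the BNE profile $\boldsymbol{\updelta}^*$ of Theorem \ref{theo:bne} (so that each reference worker $r$ is drawn from the participating set and $\delta_r^*=\underline{\delta}_r$ conditioned on $\underline{\delta}_r\leq\Delta_t$), rather than at an arbitrary strategy profile, and to handle the drop-out case separately, where the payment and cost are both zero by definition.
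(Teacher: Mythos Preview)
Your proposal is correct and follows essentially the same approach as the paper, which disposes of the theorem in one sentence by noting that only workers with non-negative expected utility at the BNE choose to participate (so individual rationality is immediate from the case analysis already performed in the proof of Theorem~\ref{theo:bne}). Your version simply spells out both cases explicitly and adds the factorisation $(\Delta_t-\underline{\delta}_s)\big(a_s(\Delta_t+\underline{\delta}_s)-c_{s,1}\big)$ to make the non-negativity transparent, which is a harmless elaboration of what the paper leaves implicit.
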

\begin{proof}
Theorem \ref{theo:ir} is an obvious fact, which directly follows from the fact that only workers with non-negative expected utilities at the BNE will choose to participate. Hence, no worker will have negative utility, and thus, individual rationality is satisfied in the complete information scenario. 
\end{proof}

Next, we discuss our selection criterion of the parameter $\Delta_t$. Following notational conventions in order statistics, we denote $\underline{\delta}_{(1)}=\min\{\underline{\delta}_1,\underline{\delta}_2,\cdots,\underline{\delta}_S\}$. We assume that the CDF $F(\cdot)$ of any $\underline{\delta}_s$ is invertible, and its inverse is $F^{-1}(\cdot)$. Based on Theorem \ref{theo:bne}, if $\Delta_t$ is set to be too small, no workers will participate at the BNE. Thus, we establish a lower bound for $\Delta_t$ in the following Theorem \ref{theo:completelb}. 

\begin{myTheo}\label{theo:completelb}
Given any $\theta_c\in(0,1)$, if $\Delta_t\geq F^{-1}\big(1-\sqrt[S]{1-\theta_c}\big)$, then $\textnormal{Pr}\big(\underline{\delta}_{(1)}\leq\Delta_t\big)\geq\theta_c$, i.e., the probability that at least one worker chooses to participate at the BNE of the sensing game, in the complete information scenario, is no less than the threshold $\theta_c$. 
\end{myTheo}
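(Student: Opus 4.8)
The plan is to reduce the claim to an elementary computation with the CDF of the minimum order statistic $\underline{\delta}_{(1)}$. Since $\underline{\delta}_1,\underline{\delta}_2,\cdots,\underline{\delta}_S$ are i.i.d.\ with common CDF $F(\cdot)$, the event $\{\underline{\delta}_{(1)}>\Delta_t\}$ is exactly the intersection of the independent events $\{\underline{\delta}_s>\Delta_t\}$ over all $s\in\mathcal{S}$. Hence $\textnormal{Pr}\big(\underline{\delta}_{(1)}>\Delta_t\big)=\prod_{s=1}^S\big(1-F(\Delta_t)\big)=\big(1-F(\Delta_t)\big)^S$, and therefore $\textnormal{Pr}\big(\underline{\delta}_{(1)}\leq\Delta_t\big)=1-\big(1-F(\Delta_t)\big)^S$.

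Next I would translate the desired inequality $\textnormal{Pr}\big(\underline{\delta}_{(1)}\leq\Delta_t\big)\geq\theta_c$ into an inequality on $F(\Delta_t)$: it is equivalent to $\big(1-F(\Delta_t)\big)^S\leq 1-\theta_c$, hence, taking $S$-th roots of the two nonnegative sides, to $1-F(\Delta_t)\leq\sqrt[S]{1-\theta_c}$, i.e.\ to $F(\Delta_t)\geq 1-\sqrt[S]{1-\theta_c}$. Note that $1-\sqrt[S]{1-\theta_c}\in(0,1)$ because $\theta_c\in(0,1)$, so this value lies in the range of $F$ and $F^{-1}\big(1-\sqrt[S]{1-\theta_c}\big)$ is well defined by the assumed invertibility of $F$.

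Finally, since $F$ is invertible, hence strictly increasing on its support with a strictly increasing inverse $F^{-1}(\cdot)$, the hypothesis $\Delta_t\geq F^{-1}\big(1-\sqrt[S]{1-\theta_c}\big)$ is equivalent to $F(\Delta_t)\geq 1-\sqrt[S]{1-\theta_c}$, which by the previous step is equivalent to $\textnormal{Pr}\big(\underline{\delta}_{(1)}\leq\Delta_t\big)\geq\theta_c$. To close the argument I would invoke Theorem \ref{theo:bne}: at the BNE worker $s$ participates precisely when $\underline{\delta}_s\leq\Delta_t$, so $\{\underline{\delta}_{(1)}\leq\Delta_t\}$ is exactly the event that at least one worker participates, which yields the stated guarantee.

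There is essentially no serious obstacle here; the only points requiring a little care are (i) invoking independence correctly to factor $\textnormal{Pr}\big(\underline{\delta}_{(1)}>\Delta_t\big)$, and (ii) justifying the monotone manipulations, in particular that the invertibility of $F$ lets one pass freely between the threshold on $\Delta_t$ and the threshold on $F(\Delta_t)$ without worrying about ties or flat pieces of $F$. Were one to drop invertibility, the statement would have to be phrased via a generalized inverse (the quantile function) and the chain of equivalences would weaken to a one-directional implication, but under the paper's standing assumption the clean argument above suffices.
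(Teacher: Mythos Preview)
Your proposal is correct and follows essentially the same approach as the paper: compute the CDF of the minimum order statistic via independence to get $\textnormal{Pr}\big(\underline{\delta}_{(1)}\leq\Delta_t\big)=1-\big(1-F(\Delta_t)\big)^S$, then invert the inequality $1-\big(1-F(\Delta_t)\big)^S\geq\theta_c$ using the assumed invertibility of $F$. Your write-up is in fact slightly more careful than the paper's, as you explicitly justify the monotone manipulations and tie the event $\{\underline{\delta}_{(1)}\leq\Delta_t\}$ back to Theorem~\ref{theo:bne}.
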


\begin{proof}
For any given $\Delta_t$, we have that 
\begin{align*}
\textnormal{Pr}\big(\underline{\delta}_{(1)}\leq\Delta_t\big)&=\textnormal{Pr}\big(\min\{\underline{\delta}_1,\underline{\delta}_2,\cdots,\underline{\delta}_S\}\leq\Delta_t\big)\\
&=1-\textnormal{Pr}\big(\min\{\underline{\delta}_1,\underline{\delta}_2,\cdots,\underline{\delta}_S\}>\Delta_t\big)\\
&=1-\prod_{s=1}^S\text{Pr}\big(\underline{\delta}_s>\Delta_t\big)=1-\big(1-F(\Delta_t)\big)^S. 
\end{align*}
Thus, for any $\theta_c\in(0,1)$, we get $\Delta_t\geq F^{-1}\big(1-\sqrt[S]{1-\theta_c}\big)$ by setting $1-\big(1-F(\Delta_t)\big)^S\geq\theta_c$, which proves Theorem \ref{theo:completelb}. 
\end{proof}

In the rest of our analyses, we use APP to denote the the value of the PMD problem's objective function guaranteed by Theseus. Theorem \ref{theo:completelb} gives us that if $\Delta_t\geq F^{-1}\big(1-\sqrt[S]{1-\theta_c}\big)$, the probability that there exists at least one participating worker at the BNE of the sensing game is guaranteed to be no less than the predefined threshold $\theta_c\in(0,1)$. However, this does not mean that $\Delta_t$ could be infinitely large, because the greater $\Delta_t$ is, the farther APP will drift apart from the minimum value of the PMD problem's objective function. Thus, in the following Theorem \ref{theo:completeub}, we derive an upper bound for the parameter $\Delta_t$. Note that for any payment mechanism that ensures the participation of at least one worker, the minimum possible value for the objective function is $\text{OPT}=\underline{\delta}_{(1)}$, which is the optimal benchmark that we compare APP with. 

\begin{myTheo}\label{theo:completeub}
In the complete information scenario, given $\alpha_c>1$ and $\beta_c\in(0,1)$, we have that 
\begin{equation}
\begin{aligned}
\textnormal{Pr}\bigg(\frac{\textnormal{APP}}{\textnormal{OPT}}\geq\alpha_c\bigg)\leq\beta_c, 
\end{aligned}
\end{equation}
if $\Delta_t\leq\overline{\Delta}_t$, where $\overline{\Delta}_t$ is the solution to 
\begin{equation}
\begin{aligned}
\overline{\Delta}_t+\sqrt{-\frac{2}{S\ln\beta_c}}\Big(R\big(\overline{\Delta}_t\big)S-\underline{\delta}\alpha_c\Big)=0, 
\end{aligned}
\end{equation}
with $R\big(\overline{\Delta}_t\big)=\int_{\underline{\delta}}^{\overline{\Delta}_t}u\frac{f(u)}{\int_{\underline{\delta}}^{\overline{\Delta}_t}f(v)dv}du$. 
\end{myTheo}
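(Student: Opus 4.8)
The plan is to exploit the closed form of the BNE from Theorem~\ref{theo:bne} to write $\textnormal{APP}$ as a sum of i.i.d.\ bounded random variables, reduce the ratio tail bound to a one-sided deviation bound for this sum, and then apply a Hoeffding-type concentration inequality; the stated threshold $\overline{\Delta}_t$ emerges by rearranging the resulting exponential bound. First, by Theorem~\ref{theo:bne} the set of participating workers at the BNE is $\mathcal{S}'=\{s\in\mathcal{S}:\underline{\delta}_s\le\Delta_t\}$ and every such worker plays $\delta_s^*=\underline{\delta}_s$, so $\textnormal{APP}=\sum_{s\in\mathcal{S}'}\delta_s^*=\sum_{s=1}^S\underline{\delta}_s\,\mathds{1}(\underline{\delta}_s\le\Delta_t)$, a sum of $S$ independent terms (the $\underline{\delta}_s$ are i.i.d.\ with density $f$), each taking values in an interval of length $\Delta_t$; this representation also automatically covers the case $\mathcal{S}'=\emptyset$, where $\textnormal{APP}=0$. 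Since $\textnormal{OPT}=\underline{\delta}_{(1)}\ge\underline{\delta}$, the event $\{\textnormal{APP}/\textnormal{OPT}\ge\alpha_c\}$ is contained in $\{\textnormal{APP}\ge\alpha_c\underline{\delta}\}$, so it suffices to upper bound $\textnormal{Pr}(\textnormal{APP}\ge\alpha_c\underline{\delta})$.

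Next I would control the mean: $\mathbb{E}[\textnormal{APP}]=S\int_{\underline{\delta}}^{\Delta_t}uf(u)\,du=S\,R(\Delta_t)\int_{\underline{\delta}}^{\Delta_t}f(v)\,dv\le S\,R(\Delta_t)$, using $\int_{\underline{\delta}}^{\Delta_t}f(v)\,dv\le 1$ and recalling $R(\Delta_t)=\int_{\underline{\delta}}^{\Delta_t}u\frac{f(u)}{\int_{\underline{\delta}}^{\Delta_t}f(v)\,dv}\,du$, which is the conditional mean $\mathbb{E}[\underline{\delta}_s\mid\underline{\delta}_s\le\Delta_t]$ and hence non-decreasing in $\Delta_t$. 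Consequently the function $g(x)=x+\sqrt{-2/(S\ln\beta_c)}\,(S R(x)-\underline{\delta}\alpha_c)$ has strictly positive derivative (the coefficient is positive since $\ln\beta_c<0$, and $R$ is non-decreasing), so $g(\overline{\Delta}_t)=0$ makes $\Delta_t\le\overline{\Delta}_t$ equivalent to $g(\Delta_t)\le 0$, i.e.\ to $\alpha_c\underline{\delta}-S R(\Delta_t)\ge\Delta_t\sqrt{-(S\ln\beta_c)/2}\ge 0$. In particular, under the hypothesis $\Delta_t\le\overline{\Delta}_t$ we get $\alpha_c\underline{\delta}-\mathbb{E}[\textnormal{APP}]\ge\alpha_c\underline{\delta}-S R(\Delta_t)\ge 0$.

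Finally, I would apply Hoeffding's inequality to $\textnormal{APP}=\sum_{s=1}^S\underline{\delta}_s\,\mathds{1}(\underline{\delta}_s\le\Delta_t)$, each summand ranging over an interval of length $\Delta_t$: for $t\ge 0$, $\textnormal{Pr}(\textnormal{APP}-\mathbb{E}[\textnormal{APP}]\ge t)\le\exp(-2t^2/(S\Delta_t^2))$. Choosing $t=\alpha_c\underline{\delta}-\mathbb{E}[\textnormal{APP}]\ge 0$ and weakening the exponent via $\mathbb{E}[\textnormal{APP}]\le S R(\Delta_t)$ (legitimate since $t\ge\alpha_c\underline{\delta}-S R(\Delta_t)\ge 0$, so the squares are ordered) yields $\textnormal{Pr}(\textnormal{APP}\ge\alpha_c\underline{\delta})\le\exp(-2(\alpha_c\underline{\delta}-S R(\Delta_t))^2/(S\Delta_t^2))$. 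Demanding this be at most $\beta_c$ and taking logarithms gives exactly $\alpha_c\underline{\delta}-S R(\Delta_t)\ge\Delta_t\sqrt{-(S\ln\beta_c)/2}$, which by the previous paragraph is equivalent to $\Delta_t\le\overline{\Delta}_t$; combining this with the reduction $\textnormal{Pr}(\textnormal{APP}/\textnormal{OPT}\ge\alpha_c)\le\textnormal{Pr}(\textnormal{APP}\ge\alpha_c\underline{\delta})$ proves the theorem.

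The part I expect to be most delicate is the bookkeeping of inequality directions: ensuring the quantity fed into Hoeffding's $t$ is nonnegative (which is precisely what the condition $\Delta_t\le\overline{\Delta}_t$ secures), justifying that replacing $\mathbb{E}[\textnormal{APP}]$ by the larger $S R(\Delta_t)$ inside a squared deviation only weakens the bound, and verifying the strict monotonicity of $g$ so that the algebraic inequality $g(\Delta_t)\le 0$ is genuinely equivalent to $\Delta_t\le\overline{\Delta}_t$ rather than merely implied by it. Everything else — the mean computation, the order-statistic inequality $\underline{\delta}_{(1)}\ge\underline{\delta}$, and the logarithmic rearrangement into the displayed defining equation for $\overline{\Delta}_t$ — is routine.
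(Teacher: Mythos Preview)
Your proposal is correct and follows essentially the same route as the paper: represent $\textnormal{APP}$ at the BNE of Theorem~\ref{theo:bne} as $\sum_{s=1}^S\underline{\delta}_s\mathds{1}_{\{\underline{\delta}_s\le\Delta_t\}}$, use $\textnormal{OPT}=\underline{\delta}_{(1)}\ge\underline{\delta}$ to reduce the ratio event to a level-crossing event for the sum, apply Hoeffding's inequality, and rearrange the resulting exponential bound into the defining equation for $\overline{\Delta}_t$. If anything you are more careful than the paper on two points the paper glosses over: the paper silently centers by $R(\Delta_t)$ as though it were the mean of $\underline{\delta}_s\mathds{1}_{\{\underline{\delta}_s\le\Delta_t\}}$ (whereas the true mean is $R(\Delta_t)F(\Delta_t)\le R(\Delta_t)$, and your weakening argument is what makes that step rigorous), and the paper does not verify the monotonicity of $g$ that turns the algebraic inequality into the stated threshold condition $\Delta_t\le\overline{\Delta}_t$.
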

\begin{proof}
At the BNE $\boldsymbol{\updelta}^*$ given in Theorem \ref{theo:bne}, we have that
\begin{equation*}
\begin{aligned}
\text{APP}=\sum_{s\in\mathcal{S}'}\delta_s^*=\sum_{s\in\mathcal{S}'}\underline{\delta}_s=\sum_{s=1}^S \underline{\delta}_s\mathds{1}_{\{\underline{\delta}_s\leq\Delta_t\}},
\end{aligned}
\end{equation*}
where, for each $s\in\mathcal{S}$, $\mathds{1}_{\{\underline{\delta}_s\leq\Delta_t\}}$ is an indicator function, such that 
\begin{equation*}
\mathds{1}_{\{\underline{\delta}_s\leq\Delta_t\}}=
\begin{cases}
\begin{aligned}
&0,&&\text{if~}\underline{\delta}_s>\Delta_t\\
&1,&&\textnormal{if~}\underline{\delta}_s\leq\Delta_t
\end{aligned},
\end{cases}
\end{equation*}
and thus, $\underline{\delta}_s\mathds{1}_{\{\underline{\delta}_s\leq\Delta_t\}}\in[0,\Delta_t]$. Thus, for a fixed $\alpha_c>1$, we have 
\begin{equation*}
\scalefont{0.9}
\begin{aligned}
&\text{Pr}\bigg(\frac{\textnormal{APP}}{\textnormal{OPT}}\geq\alpha_c\bigg)\\
=&\text{Pr}\Bigg(\frac{\sum_{s=1}^S \underline{\delta}_s\mathds{1}_{\{\underline{\delta}_s\leq\Delta_t\}}}{\underline{\delta}_{(1)}}\geq\alpha_c\Bigg)\leq\text{Pr}\Bigg(\frac{\sum_{s=1}^S \underline{\delta}_s\mathds{1}_{\{\underline{\delta}_s\leq\Delta_t\}}}{\underline{\delta}}\geq\alpha_c\Bigg)\\
=&\text{Pr}\Bigg(\frac{\sum_{s=1}^S\big(\underline{\delta}_s\mathds{1}_{\{\underline{\delta}_s\leq\Delta_t\}}-\mathbb{E}[\underline{\delta}_s|\underline{\delta}_s\leq\Delta_t]\big)}{S}\geq\frac{\alpha_c\underline{\delta}}{S}-\mathbb{E}\big[\underline{\delta}_1\big|\underline{\delta}_1\leq\Delta_t\big]\Bigg)\\
\leq&\exp\Bigg(-\frac{2 S^2\big(\frac{\alpha_c\underline{\delta}}{S}-\mathbb{E}[\underline{\delta}_1|\underline{\delta}_1\leq\Delta_t]\big)^2}{S\Delta_t^2}\Bigg)=\exp\Bigg(-\frac{2(\alpha_c\underline{\delta}-R(\Delta_t)S)^2}{S\Delta_t^2}\Bigg), 
\end{aligned}
\end{equation*}
where the last inequality is because of the Hoeffding's inequality, and $R(\Delta_t)=\int_{\underline{\delta}}^{\Delta_t}u\frac{f(u)}{\int_{\underline{\delta}}^{\Delta_t}f(v)dv}du$. For a fixed $\beta_c\in(0,1)$, by setting $\exp\Big(-\frac{2(\alpha_c\underline{\delta}-R(\Delta_t)S)^2}{S\Delta_t^2}\Big)\leq\beta_c$, we get that $\Delta_t+\sqrt{-\frac{2}{S\ln\beta_c}}\big(R(\Delta_t)S-\underline{\delta}\alpha_c\big)\leq 0$. Therefore, by setting $\Delta_t$ to be no greater than the upper bound $\overline{\Delta}_t$ given in Theorem \ref{theo:completeub}, we have $\textnormal{Pr}\big(\frac{\textnormal{APP}}{\textnormal{OPT}}\geq\alpha_c\big)\leq\beta_c$. 
\end{proof}

By Theorem \ref{theo:completeub}, we have that, as long as $\Delta_t\leq\overline{\Delta}_t$, the probability that the approximation ratio $\frac{\text{APP}}{\text{OPT}}\geq\alpha_c$ is no greater than $\beta_c$, for the predefined constants $\alpha_c>1$ and $\beta_c\in(0,1)$. This shows the probabilistic guarantee on the approximation ratio of Theseus compared to the optimal payment mechanism. Next, we have the following Corollary \ref{theo:cor1} about the range from which the parameter $\Delta_t$ should be selected.

\begin{myCor}\label{theo:cor1}
By jointly considering Theorem \ref{theo:completelb} and \ref{theo:completeub}, the parameter $\Delta_t$ should satisfy that $F^{-1}\big(1-\sqrt[S]{1-\theta_c}\big)\leq\Delta_t\leq\overline{\Delta}_t$ in the complete information scenario, in order to guarantee that with high probability there exist participating workers at the corresponding BNE (Theorem \ref{theo:completelb}), and that with high probability Theseus has a small approximation ratio (Theorem \ref{theo:completeub}).
\end{myCor}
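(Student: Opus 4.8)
\textbf{Proof proposal for Corollary \ref{theo:cor1}.} This corollary is essentially a direct conjunction of the two preceding theorems, so the plan is simply to combine their conclusions and then comment on the consistency (nonemptiness) of the resulting interval. First I would invoke Theorem \ref{theo:completelb}: for the chosen participation-confidence level $\theta_c\in(0,1)$, the event that at least one worker participates at the BNE of Theorem \ref{theo:bne} has probability at least $\theta_c$ whenever $\Delta_t\geq F^{-1}\big(1-\sqrt[S]{1-\theta_c}\big)$. This pins down the lower end of the admissible range and reflects the requirement that $\Delta_t$ not be so small that Theseus deters everyone from participating.

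Next I would invoke Theorem \ref{theo:completeub}: for the chosen approximation parameters $\alpha_c>1$ and $\beta_c\in(0,1)$, the bound $\textnormal{Pr}\big(\textnormal{APP}/\textnormal{OPT}\geq\alpha_c\big)\leq\beta_c$ holds whenever $\Delta_t\leq\overline{\Delta}_t$, where $\overline{\Delta}_t$ is the root of $\overline{\Delta}_t+\sqrt{-2/(S\ln\beta_c)}\big(R(\overline{\Delta}_t)S-\underline{\delta}\alpha_c\big)=0$. This supplies the upper end of the admissible range and encodes the requirement that $\Delta_t$ not be so large that APP drifts far from the optimum $\underline{\delta}_{(1)}$. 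Taking the intersection of the two one-sided constraints yields $F^{-1}\big(1-\sqrt[S]{1-\theta_c}\big)\leq\Delta_t\leq\overline{\Delta}_t$, which is exactly the claimed range; any $\Delta_t$ in this interval simultaneously enjoys both probabilistic guarantees, completing the argument.

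The only substantive point beyond this bookkeeping is to check that the interval is nonempty for reasonable parameter regimes, i.e.\ that $F^{-1}\big(1-\sqrt[S]{1-\theta_c}\big)\leq\overline{\Delta}_t$; I expect this to be the main (and only) obstacle. I would handle it by noting that the lower bound $F^{-1}\big(1-\sqrt[S]{1-\theta_c}\big)$ is monotonically decreasing in $\theta_c\to 0^+$ and can be driven down toward $\underline{\delta}$, while $\overline{\Delta}_t$ can be driven up by loosening $\alpha_c$ or $\beta_c$ (larger $\alpha_c$, larger $\beta_c$, equivalently $-1/\ln\beta_c$ larger), so for an appropriate joint choice of $\theta_c,\alpha_c,\beta_c$ the interval is guaranteed to be well-defined; the platform then selects $\Delta_t$ anywhere inside it. Since the corollary as stated only asserts what range $\Delta_t$ "should satisfy," a brief remark to this effect suffices and no further computation is needed.
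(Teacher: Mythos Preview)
Your proposal is correct and matches the paper's approach: the paper states Corollary~\ref{theo:cor1} without any separate proof, treating it as an immediate conjunction of Theorems~\ref{theo:completelb} and~\ref{theo:completeub}, which is exactly what you do. Your additional remark on the nonemptiness of the interval $\big[F^{-1}(1-\sqrt[S]{1-\theta_c}),\,\overline{\Delta}_t\big]$ goes slightly beyond the paper (which never addresses this point explicitly) and is a sensible clarification, though note that the corollary as written only prescribes what $\Delta_t$ ``should satisfy'' and does not claim feasibility, so this discussion is supplementary rather than required.
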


\subsection{Incomplete Information Scenario}\label{sec:incomplete}
\subsubsection{Parameter Selection}
In this section, we study a more practical incomplete information scenario, where the platform does not know the exact values of each worker $s$'s $c_{s,1}$ and $c_{s,2}$, but instead, only knows that $c_{s,1}\in[\underline{c}_1, \overline{c}_1]$, and $c_{s,2}\in[\underline{c}_2, \overline{c}_2]$, for each worker $s$. In this case, given any $\Delta_l$ and $\Delta_h$, such that $\underline{\delta}\leq\Delta_l<\Delta_h\leq\overline{\delta}$, we can parameterize Theseus with any set of parameters $\{(a_s,b_s)|s\in\mathcal{S}\}$ such that Condition (\ref{eq:con4})-(\ref{eq:con7}) are satisfied. 
\begin{numcases}{\hspace{-0.5cm}}
\hspace{-0.2cm}a_s\geq \frac{\overline{c}_1}{2\underline{\delta}},&$\forall s\in\mathcal{S}$\label{eq:con4}\\
\hspace{-0.2cm}b_s\leq a_s\big(\Delta_h^2+A(\Delta_h)\big)-\overline{c}_1\Delta_h+\underline{c}_2,&$\forall s\in\mathcal{S}$\label{eq:con5}\\
\hspace{-0.2cm}b_s\geq a_s\big(\Delta_l^2+A(\Delta_h)\big)-\underline{c}_1\Delta_l+\overline{c}_2,,&$\forall s\in\mathcal{S}$\label{eq:con6}\\
\hspace{-0.2cm}\sum_{s=1}^S b_s\leq B+\sum_{s=1}^S 2 a_s \underline{\delta}^2,\label{eq:con7}
\end{numcases}
where $A(\Delta_h)=\int_{\underline{\delta}}^{\Delta_h}u^2\frac{f(u)}{\int_{\underline{\delta}}^{\Delta_h}f(v)dv}du$. Note that the criterion of selecting $\Delta_l$ and $\Delta_h$ will be discussed in Section \ref{sec:incompleteanalysis} as we introduce the corresponding analyses. Given these conditions, one specific way of parameter selection for each $s\in\mathcal{S}$ is to choose an $a_s\geq \frac{\overline{c}_1}{2\underline{\delta}}$ such that Condition (\ref{eq:con4})-(\ref{eq:con7}) are satisfied. 

\subsubsection{Analysis}\label{sec:incompleteanalysis}
In this section, we firstly characterize the BNE of the sensing game by parameterizing Theseus in the incomplete information scenario according to Condition (\ref{eq:con4})-(\ref{eq:con7}) in the following Theorem \ref{theo:bnein}.

\begin{myTheo}\label{theo:bnein}
If parameters $\{(a_s,b_s)|s\in\mathcal{S}\}$ satisfy Condition (\ref{eq:con4})-(\ref{eq:con6}) , we have a BNE $\boldsymbol{\updelta}^*=(\delta_1^*, \delta_2^*,\cdots, \delta_S^*)$ of the sensing game in the incomplete information scenario, such that, for each worker $s\in\mathcal{S}$, 
\begin{equation}\label{eq:bneincomplete}
\delta_s^*=
\begin{cases}
\begin{aligned}
&\bot,&&\textnormal{if~}\underline{\delta}_s>\Delta_h\\
&\underline{\delta}_s,&&\textnormal{if~}\underline{\delta}_s\leq\Delta_l
\end{aligned}.
\end{cases}
\end{equation}
\end{myTheo}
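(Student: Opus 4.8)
The plan is to mirror the argument for Theorem~\ref{theo:bne}, replacing the single threshold $\Delta_t$ by the pair $\Delta_l,\Delta_h$ and the equalities of the complete-information case by the inequalities of Conditions~(\ref{eq:con5}) and~(\ref{eq:con6}). I would fix the candidate profile $\boldsymbol{\updelta}^*$ in which every worker with $\underline{\delta}_s\le\Delta_h$ participates and plays $\underline{\delta}_s$, while every worker with $\underline{\delta}_s>\Delta_h$ drops out; declaring the ``gray zone'' $(\Delta_l,\Delta_h]$ to participate is precisely what makes the conditional distribution of the reference worker's standard deviation that of $\underline{\delta}_1$ given $\underline{\delta}_1\le\Delta_h$, hence makes $A(\Delta_h)$ the relevant conditional second moment and matches the form of the conditions.

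First I would reproduce the utility computation of Theorem~\ref{theo:bne}: if worker $s$ participates then, conditioned on the reference worker's strategy $\delta_r^*$, we have $\mathbb{E}\big[u_s(\delta_s,\boldsymbol{\updelta}_{-s}^*)\,\big|\,\delta_r^*\big]=b_s-a_s\big(\delta_s^2+(\delta_r^*)^2\big)+c_{s,1}\delta_s-c_{s,2}$, using $X_m^s-X_m^r=N(0,\delta_s^2)-N(0,\delta_r^2)$ and hence $\mathbb{E}[(X_m^s-X_m^r)^2]=\delta_s^2+\delta_r^2$. As a function of $\delta_s$ this is a downward parabola with vertex $c_{s,1}/(2a_s)$, and Condition~(\ref{eq:con4}) with $c_{s,1}\le\overline{c}_1$ gives $c_{s,1}/(2a_s)\le\overline{c}_1/(2a_s)\le\underline{\delta}\le\underline{\delta}_s$, so the best participating strategy is $\delta_s=\underline{\delta}_s$, regardless of $\delta_r^*$. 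Taking the expectation over the reference worker then turns $\mathbb{E}[(\delta_r^*)^2]$ into $A(\Delta_h)$, so a participating worker's expected utility is $U_s(\underline{\delta}_s):=b_s-a_s\big(\underline{\delta}_s^2+A(\Delta_h)\big)+c_{s,1}\underline{\delta}_s-c_{s,2}$.

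The heart of the proof is then monotonicity plus endpoint evaluation. Since $\partial U_s/\partial\underline{\delta}_s=-2a_s\underline{\delta}_s+c_{s,1}\le-2a_s\underline{\delta}+\overline{c}_1\le 0$ by Condition~(\ref{eq:con4}), $U_s$ is non-increasing in $\underline{\delta}_s$. Substituting the lower bound on $b_s$ from Condition~(\ref{eq:con6}) yields $U_s(\Delta_l)\ge(c_{s,1}-\underline{c}_1)\Delta_l+(\overline{c}_2-c_{s,2})\ge 0$, so every worker with $\underline{\delta}_s\le\Delta_l$ has $U_s(\underline{\delta}_s)\ge U_s(\Delta_l)\ge 0$ and (weakly) prefers participating with $\underline{\delta}_s$ over dropping out; substituting the upper bound on $b_s$ from Condition~(\ref{eq:con5}) yields $U_s(\Delta_h)\le(c_{s,1}-\overline{c}_1)\Delta_h+(\underline{c}_2-c_{s,2})\le 0$, so every worker with $\underline{\delta}_s>\Delta_h$ has $U_s(\underline{\delta}_s)\le U_s(\Delta_h)\le 0$ and (weakly) prefers dropping out. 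Together with the maximisation step this shows $\boldsymbol{\updelta}^*$ prescribes a best response to all workers outside $(\Delta_l,\Delta_h]$, which is exactly the claim~(\ref{eq:bneincomplete}).

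The step I expect to be the main obstacle is the internal consistency of the candidate profile on the gray zone $(\Delta_l,\Delta_h]$: unlike the complete-information case, where a single threshold leaves the reference-worker distribution unambiguous, here one must commit the gray-zone workers to participate in order to obtain $A(\Delta_h)$ inside $U_s$, yet Conditions~(\ref{eq:con4})--(\ref{eq:con6}) only pin the sign of $U_s$ down outside this interval, so for $\underline{\delta}_s\in(\Delta_l,\Delta_h]$ whether participation is itself a best response is not forced. I would address this exactly as the paper already does for Theorem~\ref{theo:bne}: the theorem only asserts existence of \emph{a} BNE matching the two clean cases, so any tie-breaking on the gray zone that keeps $\mathbb{E}[(\delta_r^*)^2]=A(\Delta_h)$ is admissible, and I would note that a full characterisation or a uniqueness statement is left to future work.
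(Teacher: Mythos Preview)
Your proposal is correct and follows essentially the same route as the paper's own proof: compute the conditional expected utility $b_s-a_s(\delta_s^2+(\delta_r^*)^2)+c_{s,1}\delta_s-c_{s,2}$, use Condition~(\ref{eq:con4}) with $c_{s,1}\le\overline{c}_1$ to force the best participating action to be $\underline{\delta}_s$, average over the reference worker to obtain $A(\Delta_h)$, and then apply Conditions~(\ref{eq:con5}) and~(\ref{eq:con6}) to sign the resulting expected utility for $\underline{\delta}_s>\Delta_h$ and $\underline{\delta}_s\le\Delta_l$ respectively. Your explicit monotonicity argument (computing $\partial U_s/\partial\underline{\delta}_s$ and evaluating at the endpoints $\Delta_l,\Delta_h$) is exactly the mechanism underlying the paper's direct comparison of the two bracketed expressions, just spelled out; and your flagging of the gray-zone consistency issue---that one must commit the workers in $(\Delta_l,\Delta_h]$ to participate so that the reference distribution yields $A(\Delta_h)$, while the conditions do not force their sign of $U_s$---is in fact more explicit than the paper, which simply computes with $\mathbb{E}[\underline{\delta}_r^2\mid\underline{\delta}_r\le\Delta_h]$ and defers the gray-zone behaviour to the remark following the theorem.
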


\begin{proof}
From Condition (\ref{eq:con4}), we have that $\underline{\delta}_s\geq\underline{\delta}\geq\frac{\overline{c}_1}{2a_s}\geq\frac{c_{s,1}}{2 a_s}$ for each worker $s$. Thus, based on exactly the same reasoning as in the proof of Theorem \ref{theo:bne}, if a worker $s$ chooses to participate when other workers take strategies $\boldsymbol{\updelta}_{-s}^*$, her strategy must be $\underline{\delta}_s$. 

Therefore, given that other workers take strategies $\boldsymbol{\updelta}_{-s}^*$, worker $s$'s expected utility is 
\begin{equation*} 
\begin{aligned}
\mathbb{E}\big[u_s(\underline{\delta}_s,\boldsymbol{\updelta}_{-s}^*)\big]&=b_s-a_s\Big(\underline{\delta}_s^2+\mathbb{E}\big[\underline{\delta}_r^2\big|\underline{\delta}_r\leq\Delta_h\big]\Big)+c_{s,1}\underline{\delta}_s-c_{s,2}\\
&=b_s-a_s\big(\underline{\delta}_s^2+A(\Delta_h)\big)+c_{s,1}\underline{\delta}_s-c_{s,2}.
\end{aligned}
\end{equation*} 

Thus, by Condition (\ref{eq:con5}), for any worker $s$ with $\underline{\delta}_s>\Delta_h$, we have that 
\begin{equation*} 
\begin{aligned}
\mathbb{E}\big[u_s(\underline{\delta}_s,\boldsymbol{\updelta}_{-s}^*)\big]\leq&\big(a_s\big(\Delta_h^2+A(\Delta_h)\big)-\overline{c}_1\Delta_h+\underline{c}_2\big)-\\
&\big(a_s\big(\underline{\delta}_s^2+A(\Delta_h)\big)-c_{s,1}\underline{\delta}_s+c_{s,2}\big)<0
\end{aligned}
\end{equation*} 
and by Condition (\ref{eq:con6}), for any worker $s$ with $\underline{\delta}_s\leq\Delta_l$, we have that 
\begin{equation*} 
\begin{aligned}
\mathbb{E}\big[u_s(\underline{\delta}_s,\boldsymbol{\updelta}_{-s}^*)\big]\geq&\big(a_s\big(\Delta_l^2+A(\Delta_h)\big)-\underline{c}_1\Delta_l+\overline{c}_2\big)-\\
&\big(a_s\big(\underline{\delta}_s^2+A(\Delta_h)\big)-c_{s,1}\underline{\delta}_s+c_{s,2}\big)\geq0.
\end{aligned}
\end{equation*}

Thus, given that other workers take strategies $\boldsymbol{\updelta}_{-s}^*$, worker $s$ will drop out, if $\underline{\delta}_s>\Delta_h$, and will take strategy $\underline{\delta}_s$, if $\underline{\delta}_s\leq\Delta_l$. Hence, there exists a BNE $\boldsymbol{\updelta}^*$ of the sensing game such that Equation (\ref{eq:bneincomplete}) is satisfied for each worker $s$. 
\end{proof}

Theorem \ref{theo:bnein} characterizes a BNE of the sensing game, where each worker $s$ with $\underline{\delta}_s>\Delta_h$ will drop out, and as long as $\underline{\delta}_s\leq\Delta_l$, she will participate with strategy $\underline{\delta}_s$. Note that, at the BNE, each worker $s$ with $\underline{\delta}_s\in(\Delta_l,\Delta_h]$ has to evaluate her expected utility based on the specific choice of $\{(a_s,b_s)|s\in\mathcal{S}\}$ in order to make the decision of whether or not to participate. All of the following analyses in this section are based on the BNE characterized in Theorem \ref{theo:bnein}. We also leave the proof of the uniqueness of BNE or the derivation of other BNEs in our future work. Next, we introduce in Theorem \ref{theo:budgetincomplete} and \ref{theo:irincomplete} about the budget feasibility and individual rationality of Theseus in the incomplete information scenario.

\begin{myTheo}\label{theo:budgetincomplete}
Condition (\ref{eq:con7}) ensures that Theseus is budget feasible in the incomplete information scenario. 
\end{myTheo}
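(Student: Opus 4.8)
The plan is to mirror the proof of Theorem~\ref{theo:budget}, replacing the complete-information BNE by the one characterized in Theorem~\ref{theo:bnein}. First I would evaluate the expected overall payment at that BNE. At $\boldsymbol{\updelta}^*$ every participating worker $s\in\mathcal{S}'$ plays $\delta_s^*=\underline{\delta}_s\le\Delta_h$, every worker that drops out receives $p_s=0$, and the reference worker $r$ that Theseus draws for $s$ also satisfies $\underline{\delta}_r\le\Delta_h$. Reusing the identity $\mathbb{E}\big[(X_m^s-X_m^r)^2\big]=\underline{\delta}_s^2+\underline{\delta}_r^2$ established in the proof of Theorem~\ref{theo:bne}, and taking the expectation over the random draw of $r$ exactly as in the proof of Theorem~\ref{theo:bnein}, the expected payment to a participating worker becomes $\mathbb{E}\big[p_s\big(\mathbf{X}(\boldsymbol{\updelta}^*)\big)\big]=b_s-a_s\big(\underline{\delta}_s^2+A(\Delta_h)\big)$, where $A(\Delta_h)=\mathbb{E}\big[\underline{\delta}_r^2\,\big|\,\underline{\delta}_r\le\Delta_h\big]$.

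Next I would bound each such term from above: since $\underline{\delta}_s\ge\underline{\delta}$ and $A(\Delta_h)\ge\underline{\delta}^2$, we get $\mathbb{E}\big[p_s\big(\mathbf{X}(\boldsymbol{\updelta}^*)\big)\big]\le b_s-2a_s\underline{\delta}^2$ for every $s\in\mathcal{S}'$. Summing over $\mathcal{S}'$ and then enlarging the index set to all of $\mathcal{S}$ (legitimate because each added summand $b_s-2a_s\underline{\delta}^2$ is nonnegative, which follows from Condition~(\ref{eq:con6}) together with $\Delta_l^2,A(\Delta_h)\ge\underline{\delta}^2$ and the natural assumption that sensing costs are nonnegative), we obtain $\mathbb{E}\big[\sum_{s\in\mathcal{S}'}p_s\big(\mathbf{X}(\boldsymbol{\updelta}^*)\big)\big]\le\sum_{s=1}^S\big(b_s-2a_s\underline{\delta}^2\big)\le B$, where the final inequality is precisely Condition~(\ref{eq:con7}). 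This establishes budget feasibility.

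The main obstacle, relative to the complete-information case, is that the BNE of Theorem~\ref{theo:bnein} does not pin down the behavior of workers with $\underline{\delta}_s\in(\Delta_l,\Delta_h]$, so the realized participating set $\mathcal{S}'$ is not determined by the mechanism's parameters alone; the bound therefore has to hold for every admissible $\mathcal{S}'$, which is exactly why I pass from $\sum_{s\in\mathcal{S}'}$ to $\sum_{s=1}^S$ and consequently need the nonnegativity of $b_s-2a_s\underline{\delta}^2$ supplied by Condition~(\ref{eq:con6}). A secondary point needing care is the distribution of the reference worker's $\underline{\delta}_r$: consistent with the utility computation in the proof of Theorem~\ref{theo:bnein}, it suffices to know $\underline{\delta}_r\le\Delta_h$, so that $A(\Delta_h)$ is the relevant conditional second moment and the estimate $A(\Delta_h)\ge\underline{\delta}^2$ is immediate. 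The remainder is the same bookkeeping as in Theorem~\ref{theo:budget}.
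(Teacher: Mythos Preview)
Your proposal is correct and follows essentially the same approach as the paper, which explicitly states that the proof is identical to that of Theorem~\ref{theo:budget} with $\Delta_t$ replaced by $\Delta_h$. You are in fact more careful than the paper in two respects: you justify the passage from $\sum_{s\in\mathcal{S}'}$ to $\sum_{s=1}^S$ via the nonnegativity of $b_s-2a_s\underline{\delta}^2$ (which the paper glosses over), and you flag the indeterminacy of $\mathcal{S}'$ for workers with $\underline{\delta}_s\in(\Delta_l,\Delta_h]$.
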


\begin{myTheo}\label{theo:irincomplete}
Theseus is individual rational in the incomplete information scenario. 
\end{myTheo}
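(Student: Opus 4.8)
The plan is to reuse the voluntary-participation argument that established Theorem~\ref{theo:ir}, since nothing about that reasoning is specific to the complete information setting. First I would observe, from Definition~\ref{def:utility}, that the strategy $\delta_s=\bot$ is always available to every worker $s$, yields payment $p_s(\mathbf{x})=0$ and cost $C_s(\bot)=0$, and hence gives expected utility exactly $0$ regardless of the other workers' strategies.

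Next I would invoke the BNE characterization of Theorem~\ref{theo:bnein}. At the BNE $\boldsymbol{\updelta}^*$ guaranteed there, each worker $s$ either drops out or participates. If $s$ drops out, then $\mathbb{E}_{\boldsymbol{\updelta}_{-s}^*}[u_s(\delta_s^*,\boldsymbol{\updelta}_{-s}^*)]=0\geq 0$, so the inequality in Definition~\ref{def:ir} holds. If $s$ participates, then by the defining property of a BNE her chosen strategy $\delta_s^*$ maximizes her expected utility against $\boldsymbol{\updelta}_{-s}^*$; since $\bot$ is a feasible alternative delivering $0$, we must have $\mathbb{E}_{\boldsymbol{\updelta}_{-s}^*}[u_s(\delta_s^*,\boldsymbol{\updelta}_{-s}^*)]\geq 0$ as well. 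Taking the two cases together yields individual rationality for all $s\in\mathcal{S}$.

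The only point needing a sentence of care --- and the nearest thing to an obstacle --- is the set of ``in-between'' workers with $\underline{\delta}_s\in(\Delta_l,\Delta_h]$, whose participation decision Theorem~\ref{theo:bnein} deliberately leaves unspecified (it depends on the concrete choice of $(a_s,b_s)$). I would stress that the argument above does not need to resolve their decision: in either case their BNE expected utility weakly dominates the zero utility of dropping out, so the conclusion is unaffected. No new computation is required beyond what was already carried out in the proof of Theorem~\ref{theo:bnein}.
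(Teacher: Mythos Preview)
Your proposal is correct and follows essentially the same voluntary-participation argument the paper uses: the paper states that the proof of Theorem~\ref{theo:irincomplete} is identical to that of Theorem~\ref{theo:ir}, which in turn simply observes that only workers with non-negative expected utility choose to participate at the BNE. Your treatment is slightly more careful in explicitly handling the $\underline{\delta}_s\in(\Delta_l,\Delta_h]$ workers, but this is just a spelled-out version of the same idea.
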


The proof of Theorem \ref{theo:budgetincomplete} is the same as that of Theorem \ref{theo:budget} except that the $\Delta_t$ is replaced by $\Delta_h$, and the proof of Theorem \ref{theo:irincomplete} is exactly identical to that of Theorem \ref{theo:ir}. Thus, we omit the formal proofs of Theorem \ref{theo:budgetincomplete} and \ref{theo:irincomplete} in this paper. Similar to Section \ref{sec:completeanalysis}, we establish ranges from which we select parameters $\Delta_l$ and $\Delta_h$. In the following Theorem \ref{theo:incompletelb}, we introduce a lower bound for $\Delta_l$. 

\begin{myTheo}\label{theo:incompletelb}
Given any $\theta_{ic}\in(0,1)$, if $\Delta_l\geq F^{-1}\big(1-\sqrt[S]{1-\theta_{ic}}\big)$, then $\textnormal{Pr}\big(\underline{\delta}_{(1)}\leq\Delta_l\big)\geq\theta_{ic}$, i.e., the probability that at least one worker chooses to participate at the BNE of the sensing game, in the incomplete information scenario, is no less than the threshold $\theta_{ic}$. 
\end{myTheo}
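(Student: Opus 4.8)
\textbf{Proof proposal for Theorem \ref{theo:incompletelb}.}

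The plan is to reuse, essentially verbatim, the argument of Theorem \ref{theo:completelb}, with $\Delta_t$ replaced by $\Delta_l$. First I would invoke Theorem \ref{theo:bnein}: at the BNE $\boldsymbol{\updelta}^*$ every worker $s$ with $\underline{\delta}_s\leq\Delta_l$ participates (with strategy $\underline{\delta}_s$). Consequently the event $\{\underline{\delta}_{(1)}\leq\Delta_l\}$ is contained in the event that at least one worker participates at the BNE, so it suffices to show $\textnormal{Pr}\big(\underline{\delta}_{(1)}\leq\Delta_l\big)\geq\theta_{ic}$. (Workers with $\underline{\delta}_s\in(\Delta_l,\Delta_h]$ may or may not participate depending on the exact choice of $\{(a_s,b_s)\}$, but this only enlarges the participation event, so the bound remains valid.)

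The second step is the standard order-statistics computation. Since $\underline{\delta}_1,\dots,\underline{\delta}_S$ are i.i.d.\ with CDF $F(\cdot)$, I would write
\begin{equation*}
\textnormal{Pr}\big(\underline{\delta}_{(1)}\leq\Delta_l\big)=1-\textnormal{Pr}\big(\min\{\underline{\delta}_1,\dots,\underline{\delta}_S\}>\Delta_l\big)=1-\prod_{s=1}^S\textnormal{Pr}\big(\underline{\delta}_s>\Delta_l\big)=1-\big(1-F(\Delta_l)\big)^S.
\end{equation*}

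The final step is to solve the inequality $1-\big(1-F(\Delta_l)\big)^S\geq\theta_{ic}$ for $\Delta_l$: this is equivalent to $\big(1-F(\Delta_l)\big)^S\leq 1-\theta_{ic}$, i.e.\ $F(\Delta_l)\geq 1-\sqrt[S]{1-\theta_{ic}}$, and since $F$ is assumed invertible and monotonically increasing, this rearranges to $\Delta_l\geq F^{-1}\big(1-\sqrt[S]{1-\theta_{ic}}\big)$, which is exactly the hypothesis. I do not anticipate any genuine obstacle here; the only point requiring a sentence of care is the first step, namely observing that $\{\underline{\delta}_{(1)}\leq\Delta_l\}$ is a \emph{sufficient} (not necessarily necessary) event for at least one participant, which is why the claimed lower bound on the participation probability still follows.
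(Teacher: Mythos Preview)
Your proposal is correct and matches the paper's intended approach exactly: the paper omits the proof entirely, remarking that it is obtained from the proof of Theorem~\ref{theo:completelb} by replacing $\Delta_t$ with $\Delta_l$ and $\theta_c$ with $\theta_{ic}$, which is precisely what you do. Your added sentence noting that $\{\underline{\delta}_{(1)}\leq\Delta_l\}$ is only a sufficient condition for participation (since workers with $\underline{\delta}_s\in(\Delta_l,\Delta_h]$ may also participate) is a welcome clarification that the paper leaves implicit.
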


The proof of Theorem \ref{theo:incompletelb} is omitted in this paper as well, because it can be directly adapted from that of Theorem \ref{theo:completelb} by changing $\Delta_t$ to $\Delta_l$ and $\theta_c$ to $\theta_{ic}$. By Theorem \ref{theo:incompletelb}, we have that, in the incomplete information scenario, it is $\Delta_l$ that decides the probability that at least one worker chooses to participate at the BNE of the sensing game. That is, as long as $\Delta_l\geq F^{-1}\big(1-\sqrt[S]{1-\theta_{ic}}\big)$, this probability, i.e., $\textnormal{Pr}\big(\underline{\delta}_{(1)}\leq\Delta_l\big)$, will be no less than the predefined threshold $\theta_{ic}$. Next, in the following Theorem \ref{theo:incompleteub}, where APP and OPT have the same meanings as in Theorem \ref{theo:completeub}, we derive an upper bound for the parameter $\Delta_h$.  

\begin{myTheo}\label{theo:incompleteub}
In the incomplete information scenario, given $\alpha_{ic}>1$ and $\beta_{ic}\in(0,1)$, we have that 
\begin{equation}
\begin{aligned}
\textnormal{Pr}\bigg(\frac{\textnormal{APP}}{\textnormal{OPT}}\geq\alpha_{ic}\bigg)\leq\beta_{ic}, 
\end{aligned}
\end{equation}
if $\Delta_h\leq\overline{\Delta}_h$, where $\overline{\Delta}_h$ is the solution to 
\begin{equation}
\begin{aligned}
\overline{\Delta}_h+\sqrt{-\frac{2}{S\ln\beta_{ic}}}\Big(R\big(\overline{\Delta}_h\big)S-\underline{\delta}\alpha_{ic}\Big)=0, 
\end{aligned}
\end{equation}
with $R\big(\overline{\Delta}_h\big)=\int_{\underline{\delta}}^{\overline{\Delta}_h}u\frac{f(u)}{\int_{\underline{\delta}}^{\overline{\Delta}_h}f(v)dv}du$. 
\end{myTheo}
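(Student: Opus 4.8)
The plan is to mirror the proof of Theorem~\ref{theo:completeub} essentially verbatim, with $\Delta_t$ replaced by $\Delta_h$, $\alpha_c$ by $\alpha_{ic}$, and $\beta_c$ by $\beta_{ic}$ throughout. The only substantive new point is that, unlike in the complete information scenario, the BNE characterized in Theorem~\ref{theo:bnein} does not fully determine the set of participating workers: it guarantees only that every worker $s$ with $\underline{\delta}_s\le\Delta_l$ participates and every worker $s$ with $\underline{\delta}_s>\Delta_h$ drops out, leaving the ``middle'' workers (those with $\underline{\delta}_s\in(\Delta_l,\Delta_h]$) potentially undecided. So the first step is to note that, whatever the middle workers do, $\mathcal{S}'\subseteq\{s:\underline{\delta}_s\le\Delta_h\}$, and hence, since $\delta_s^*=\underline{\delta}_s\ge 0$ for all $s\in\mathcal{S}'$,
\begin{equation*}
\text{APP}=\sum_{s\in\mathcal{S}'}\delta_s^*=\sum_{s\in\mathcal{S}'}\underline{\delta}_s\leq\sum_{s=1}^S\underline{\delta}_s\mathds{1}_{\{\underline{\delta}_s\leq\Delta_h\}},
\end{equation*}
where each summand $\underline{\delta}_s\mathds{1}_{\{\underline{\delta}_s\leq\Delta_h\}}$ lies in the interval $[0,\Delta_h]$.

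Next, using $\text{OPT}=\underline{\delta}_{(1)}\geq\underline{\delta}$ to replace the denominator, I would write
\begin{equation*}
\text{Pr}\bigg(\frac{\text{APP}}{\text{OPT}}\geq\alpha_{ic}\bigg)\leq\text{Pr}\Bigg(\frac{\sum_{s=1}^S\underline{\delta}_s\mathds{1}_{\{\underline{\delta}_s\leq\Delta_h\}}}{\underline{\delta}}\geq\alpha_{ic}\Bigg),
\end{equation*}
center the i.i.d.\ bounded variables $\underline{\delta}_s\mathds{1}_{\{\underline{\delta}_s\leq\Delta_h\}}$ about their common mean, and apply Hoeffding's inequality exactly as in Theorem~\ref{theo:completeub}. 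This yields
\begin{equation*}
\text{Pr}\bigg(\frac{\text{APP}}{\text{OPT}}\geq\alpha_{ic}\bigg)\leq\exp\Bigg(-\frac{2\big(\alpha_{ic}\underline{\delta}-R(\Delta_h)S\big)^2}{S\Delta_h^2}\Bigg),
\end{equation*}
with $R(\Delta_h)=\int_{\underline{\delta}}^{\Delta_h}u\frac{f(u)}{\int_{\underline{\delta}}^{\Delta_h}f(v)dv}du$.

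Finally, I would impose $\exp\!\big(-2(\alpha_{ic}\underline{\delta}-R(\Delta_h)S)^2/(S\Delta_h^2)\big)\leq\beta_{ic}$, take logarithms, and rearrange into the implicit inequality $\Delta_h+\sqrt{-\tfrac{2}{S\ln\beta_{ic}}}\big(R(\Delta_h)S-\underline{\delta}\alpha_{ic}\big)\leq 0$, so that choosing $\Delta_h\leq\overline{\Delta}_h$ (the root of the corresponding equation) gives the stated bound. There is no genuinely new obstacle here; the only places needing care are the routine ones inherited from Theorem~\ref{theo:completeub} — correctly centering the truncated variables and checking the sign of $\alpha_{ic}\underline{\delta}-R(\Delta_h)S$ so that squaring preserves the inequality direction when passing from the exponential tail bound to the implicit condition on $\overline{\Delta}_h$ — together with the one observation highlighted above, namely that the indeterminacy of $\mathcal{S}'$ among the middle workers is absorbed by upper-bounding $\text{APP}$ using $\Delta_h$ rather than $\Delta_l$.
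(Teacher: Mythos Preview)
Your proposal is correct and follows essentially the same approach as the paper's own proof. In particular, the paper also begins by noting that at the BNE of Theorem~\ref{theo:bnein} one has $\text{APP}=\sum_{s\in\mathcal{S}'}\underline{\delta}_s\leq\sum_{s=1}^S\underline{\delta}_s\mathds{1}_{\{\underline{\delta}_s\leq\Delta_h\}}$, and then invokes the chain of inequalities from Theorem~\ref{theo:completeub} verbatim with $\Delta_h$, $\alpha_{ic}$, $\beta_{ic}$ in place of $\Delta_t$, $\alpha_c$, $\beta_c$; your observation that the indeterminacy of the ``middle'' workers is absorbed by upper-bounding via $\Delta_h$ is exactly the point the paper exploits (though the paper states the inequality without spelling out that reasoning).
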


\begin{proof}
At the BNE $\boldsymbol{\updelta}^*$ of the sensing game characterized in Theorem \ref{theo:bnein}, we have that 
\begin{equation*}
\begin{aligned}
\text{APP}=\sum_{s\in\mathcal{S}'}\underline{\delta}_s\leq\sum_{s=1}^S \underline{\delta}_s\mathds{1}_{\{\underline{\delta}_s\leq\Delta_h\}},
\end{aligned}
\end{equation*}
where, for each $s\in\mathcal{S}$, $\mathds{1}_{\{\underline{\delta}_s\leq\Delta_h\}}$ is an indicator function, such that 
\begin{equation*}
\mathds{1}_{\{\underline{\delta}_s\leq\Delta_h\}}=
\begin{cases}
\begin{aligned}
&0,&&\text{if~}\underline{\delta}_s>\Delta_h\\
&1,&&\textnormal{if~}\underline{\delta}_s\leq\Delta_h
\end{aligned}.
\end{cases}
\end{equation*}

Thus, similar to the proof of Theorem \ref{theo:completeub}, for a fixed $\alpha_{ic}>1$, we have that 
\begin{equation*}
\begin{aligned}
\text{Pr}\bigg(\frac{\textnormal{APP}}{\textnormal{OPT}}\geq\alpha_{ic}\bigg)&\leq\text{Pr}\Bigg(\frac{\sum_{s=1}^S \underline{\delta}_s\mathds{1}_{\{\underline{\delta}_s\leq\Delta_h\}}}{\underline{\delta}_{(1)}}\geq\alpha_{ic}\Bigg)\\
&\leq\exp\Bigg(-\frac{2(\alpha_{ic}\underline{\delta}-R(\Delta_h)S)^2}{S\Delta_h^2}\Bigg),
\end{aligned}
\end{equation*}
where $R(\Delta_h)=\int_{\underline{\delta}}^{\Delta_h}u\frac{f(u)}{\int_{\underline{\delta}}^{\Delta_h}f(v)dv}du$. Thus, for any fixed $\beta_{ic}\in(0,1)$, by setting $\exp\Big(-\frac{2(\alpha_{ic}\underline{\delta}-R(\Delta_h)S)^2}{S\Delta_h^2}\Big)\leq\beta_{ic}$, we get $\Delta_h+\sqrt{-\frac{2}{S\ln\beta_{ic}}}\big(R(\Delta_h)S-\underline{\delta}\alpha_{ic}\big)\leq 0$. Therefore, by setting $\Delta_h$ to be no greater than the upper bound $\overline{\Delta}_h$ given in Theorem \ref{theo:incompleteub}, we have that $\textnormal{Pr}\big(\frac{\textnormal{APP}}{\textnormal{OPT}}\geq\alpha_{ic}\big)\leq\beta_{ic}$.
\end{proof}

\vspace{-0.02cm}
Similar to Theorem \ref{theo:completeub}, Theorem \ref{theo:incompleteub} gives us a probabilistic guarantee on the approximation ratio of Theseus compared to the optimal payment mechanism in the incomplete information scenario. That is, as long as $\Delta_h\leq\overline{\Delta}_h$, the probability that the approximation ratio $\frac{\textnormal{APP}}{\textnormal{OPT}}\geq\alpha_{ic}$ is no greater than $\beta_{ic}$, for the predefined constants $\alpha_{ic}>1$ and $\beta_{ic}\in(0,1)$. Next, we introduce in Corollary \ref{theo:cor2} about the ranges from which we select the parameters $\Delta_l$ and $\Delta_h$. 

\begin{myCor}\label{theo:cor2}
By jointly considering Theorem \ref{theo:incompletelb} and \ref{theo:incompleteub}, in the incomplete information scenario, the parameters $\Delta_l$ and $\Delta_h$ should satisfy $F^{-1}\big(1-\sqrt[S]{1-\theta_{ic}}\big)\leq\Delta_l<\Delta_h\leq\overline{\Delta}_h$, in order to guarantee, with high probability, the existence of at least one participating worker at the corresponding BNE (Theorem \ref{theo:incompletelb}), and that with high probability Theseus yields a small approximation ratio (Theorem \ref{theo:incompleteub}).
\end{myCor}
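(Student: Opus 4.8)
The plan is to obtain Corollary~\ref{theo:cor2} as an immediate consequence of combining Theorem~\ref{theo:incompletelb} and Theorem~\ref{theo:incompleteub}, together with the structural ordering $\underline{\delta}\leq\Delta_l<\Delta_h\leq\overline{\delta}$ already imposed by the parameter-selection Conditions (\ref{eq:con4})--(\ref{eq:con7}). No new estimates are needed; the work is entirely in stitching the two guarantees together and verifying that the resulting interval for $(\Delta_l,\Delta_h)$ is non-degenerate.

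First I would invoke Theorem~\ref{theo:incompletelb}: for the chosen confidence level $\theta_{ic}\in(0,1)$, taking $\Delta_l\geq F^{-1}\big(1-\sqrt[S]{1-\theta_{ic}}\big)$ forces $\textnormal{Pr}\big(\underline{\delta}_{(1)}\leq\Delta_l\big)\geq\theta_{ic}$, which by the BNE characterization of Theorem~\ref{theo:bnein} is exactly the probability that at least one worker participates. Next I would invoke Theorem~\ref{theo:incompleteub}: for the chosen $\alpha_{ic}>1$ and $\beta_{ic}\in(0,1)$, taking $\Delta_h\leq\overline{\Delta}_h$, with $\overline{\Delta}_h$ the root of the equation stated there, forces $\textnormal{Pr}\big(\textnormal{APP}/\textnormal{OPT}\geq\alpha_{ic}\big)\leq\beta_{ic}$. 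Conjoining these two bounds with the ordering constraint $\Delta_l<\Delta_h$ yields the chain $F^{-1}\big(1-\sqrt[S]{1-\theta_{ic}}\big)\leq\Delta_l<\Delta_h\leq\overline{\Delta}_h$ asserted in the corollary, and simultaneously delivers both advertised high-probability guarantees.

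The one point I would flag, rather than a genuine obstacle, is feasibility: the interval is useful only when $F^{-1}\big(1-\sqrt[S]{1-\theta_{ic}}\big)<\overline{\Delta}_h$ and both endpoints lie in $[\underline{\delta},\overline{\delta}]$, which is a condition on the primitives (the PDF $f(\cdot)$, the worker count $S$, and the target levels $\theta_{ic},\alpha_{ic},\beta_{ic}$) and can be arranged by relaxing $\theta_{ic}$ or loosening $\alpha_{ic},\beta_{ic}$. Noting this suffices; beyond it the corollary follows with no further computation.
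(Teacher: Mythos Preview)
Your proposal is correct and matches the paper's approach: the corollary is stated without a separate proof in the paper, being presented as an immediate synthesis of Theorems~\ref{theo:incompletelb} and~\ref{theo:incompleteub} together with the standing ordering $\Delta_l<\Delta_h$. Your added remark about feasibility of the interval is a reasonable observation that the paper does not make explicit, but it does not alter the argument.
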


\subsection{Summary of Parameterization}
Thus far, we have finished our discussion of parameterizing Theseus in both the complete (Section \ref{sec:complete}) and incomplete (Section \ref{sec:incomplete}) information scenario. In summary, in the complete information scenario, if parameters $\{(a_s,b_s)|s\in\mathcal{S}\}$ and $\Delta_t$ satisfy Condition (\ref{eq:con1})-(\ref{eq:con3}) and Corollary \ref{theo:cor1}, at the BNE derived in Theorem \ref{theo:bne}, Theseus satisfies budget feasibility (Theorem \ref{theo:budget}), individual rationality (Theorem \ref{theo:ir}), as well as with high probability it has a small approximation ratio (Theorem \ref{theo:completeub}), and with high probability it guarantees that there exist participating workers (Theorem \ref{theo:completelb}). Similarly, in the incomplete information scenario, if we set parameters $\{(a_s,b_s)|s\in\mathcal{S}\}$, $\Delta_l$, and $\Delta_h$ according to Condition (\ref{eq:con4})-(\ref{eq:con7}) and Corollary \ref{theo:cor2}, at the BNE characterized in Theorem \ref{theo:bnein}, Theseus also satisfies budget feasibility (Theorem \ref{theo:budgetincomplete}), individual rationality (Theorem \ref{theo:irincomplete}), as well as with high probability it guarantees that there will be participating workers (Theorem \ref{theo:incompletelb}), and with high probability it has a small approximation ratio (Theorem \ref{theo:incompleteub}).

\section{Performance Evaluation} 
In this section, we introduce the baseline methods, as well as simulation settings and results. 
\subsection{Baseline Methods}
In the first baseline method, called \textit{Max Std}, considered in this paper, each worker $s$ takes strategy $\overline{\delta}_s$, i.e., the maximum standard deviation of the difference between her data and the ground truths. Max Std actually corresponds to the family of payment mechanisms that provide rather insufficient incentives so that workers are only willing to spend little amount of effort. Different from Max Std, in the second baseline method, called \textit{Random Std}, each worker $s$ selects her strategy $\delta_s$ uniformly at random from the range $[\underline{\delta}_s,\overline{\delta}_s]$. We compare these two baseline methods with the BNEs of the sensing game induced by Theseus in both the complete and incomplete information scenario, which are established in Theorem \ref{theo:bne} and \ref{theo:bnein}, respectively. Note that we do not compare Theseus with existing mechanisms in past literature, because, as indicated in Section \ref{sec:related}, none of them consider the same scenario as this paper, and thus they are not comparable with Theseus.

\subsection{Simulation Settings}

For the complete information scenario, we consider setting I and II given in Table \ref{tab:setting}. In both of these two settings, for each worker $s$, $\underline{\delta}_s$ is generated uniformly at random from the range $[0.1,4]$, i.e., $\underline{\delta}_s\sim U[0.1,4]$. Furthermore, we set $\theta_c=0.9$, $\alpha_c=5$, and $\beta_c=0.1$, and generate $\overline{\delta}_s$ and $x_m^{\text{truth}}$ uniformly at random from the range $[5,10]$ and $[0,10]$, respectively. In setting I, we fix the number of tasks as $M=30$ and vary the number of workers $S$ from $120$ to $150$, whereas in setting II, we fix the number of workers as $S=130$ and vary the number of tasks $M$ from $10$ to $40$. Note that the parameter $\Delta_t$ is generated uniformly at random from the range $[F^{-1}(1-\sqrt[S]{1-\theta_c}),\overline{\Delta}_t]$. In setting III and IV for the incomplete information scenario, we generate the parameters $\underline{\delta}_s$, $\overline{\delta}_s$, $\theta_{ic}$, $\alpha_{ic}$, $\beta_{ic}$, $x_m^{\text{truth}}$, $S$, and $M$ in the same way as in setting I and II, and select $\Delta_l$ and $\Delta_h$ uniformly at random from the range $[F^{-1}(1-\sqrt[S]{1-\theta_{ic}}),\overline{\Delta}_h]$ such that $\Delta_l<\Delta_h$. 

\begin{table}[h]\footnotesize\setlength{\tabcolsep}{2.1pt}
\centering
\begin{tabular}{|c|c|c|c|c|c|c|c|c|}
\hline
Setting&$\underline{\delta}_s$&$\overline{\delta}_s$&$\theta_c, \theta_{ic}$&$\alpha_c, \alpha_{ic}$&$\beta_c, \beta_{ic}$&$x_m^{\text{truth}}$&$S$&$M$\\
\hline\hline
I, III&$[0.1,4]$&$[5,10]$&$0.9$&$5$&$0.1$&$[0,10]$&$[120,150]$&$30$\\
\hline
II, IV&$[0.1,4]$&$[5,10]$&$0.9$&$5$&$0.1$&$[0,10]$&$130$&$[10, 40]$\\
\hline
\end{tabular}
\caption{Simulation settings}\label{tab:setting}
\end{table}

\vspace{-0.4cm}
In all these settings, given $\delta_s$ and $x_m^{\text{truth}}$, worker $s$'s data on task $m$, which is $x_m^s$, is generated by adding a randomly sampled noise from the distribution $N(0, \delta_s^2)$ to the ground truth $x_m^{\text{truth}}$. Then, workers' data generated by Max Std and Random Std, as well as at the BNEs of the sensing game induced by Theseus, are treated as the inputs to a truth discovery algorithm, respectively, to calculate the estimated ground truths. In our simulation, the truth discovery algorithm that we implement is the widely adopted CRH \cite{QiSIGMOD14}, which calculates each participating worker $s$'s weight $w_s$ as
\begin{equation}
\begin{aligned}
w_s=\log\Bigg(\frac{\sum_{s'\in\mathcal{S}'}\sum_{m\in\mathcal{M}}|x_m^{s'}-x_m^*|^2}{\sum_{m\in\mathcal{M}}|x_m^s-x_m^*|^2}\Bigg).
\end{aligned}
\end{equation}

\subsection{Simulation Results}

In this section, we firstly demonstrate our simulation results regarding the comparison among Max Std, Random Std, and Theseus, in terms of the truth discovery algorithm CRH's MAEs (defined in Definition \ref{def:ep}), in Figure \ref{fig:MAEFixTaskC}-\ref{fig:MAEFixWorkerIC}. Note that for each specific worker and task number, we repeatedly generate workers' data, run the truth discovery algorithm CRH, and calculate the corresponding MAE for 10000 times.  

\begin{figure}[h]
\begin{minipage}{.49\linewidth}
\centering
\includegraphics[width=.98\textwidth]{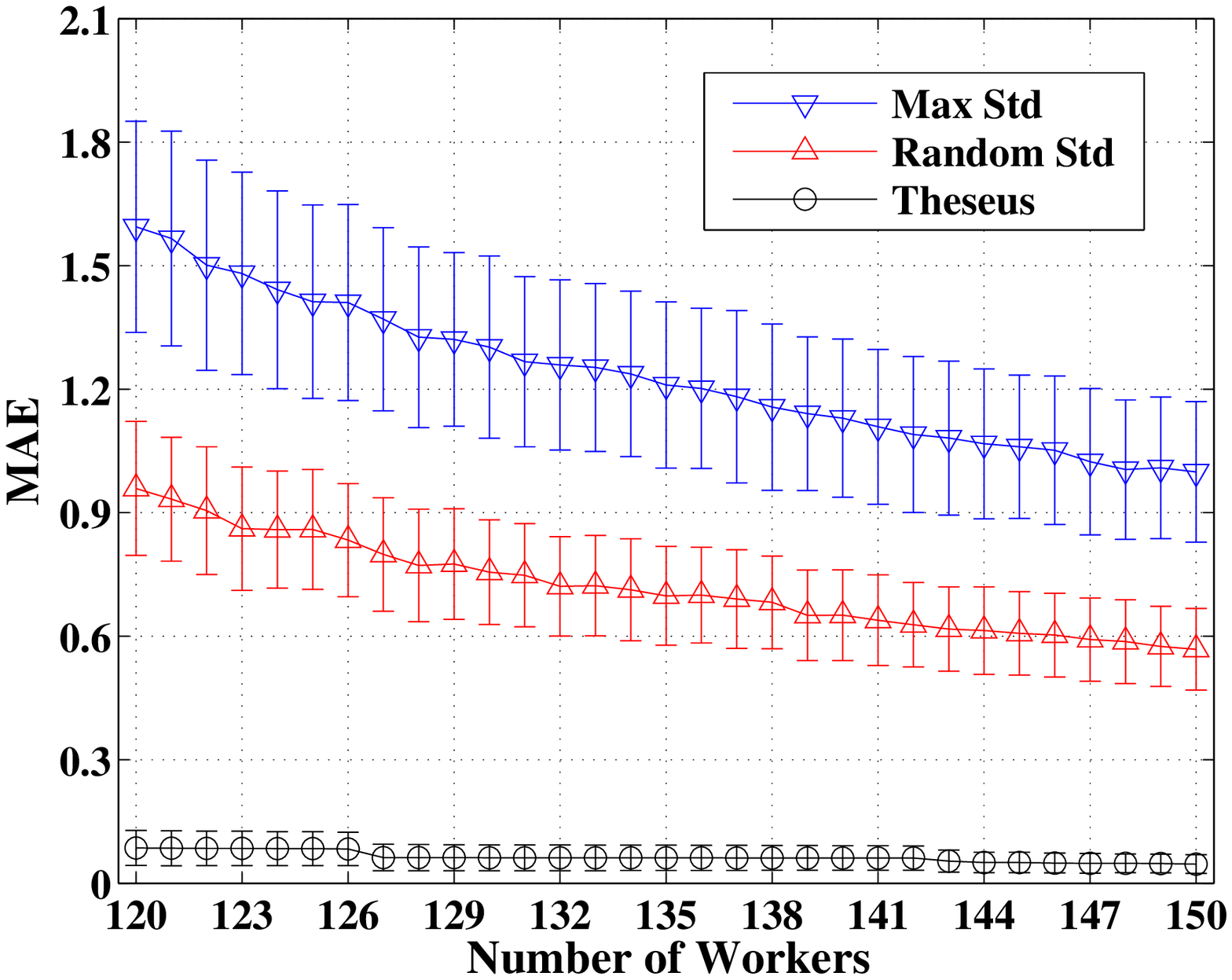}\\
\vspace{-0.45cm}
\caption{MAE comparison (setting I)}
\label{fig:MAEFixTaskC}
\end{minipage}
\begin{minipage}{.49\linewidth}
\centering
\includegraphics[width=.98\textwidth]{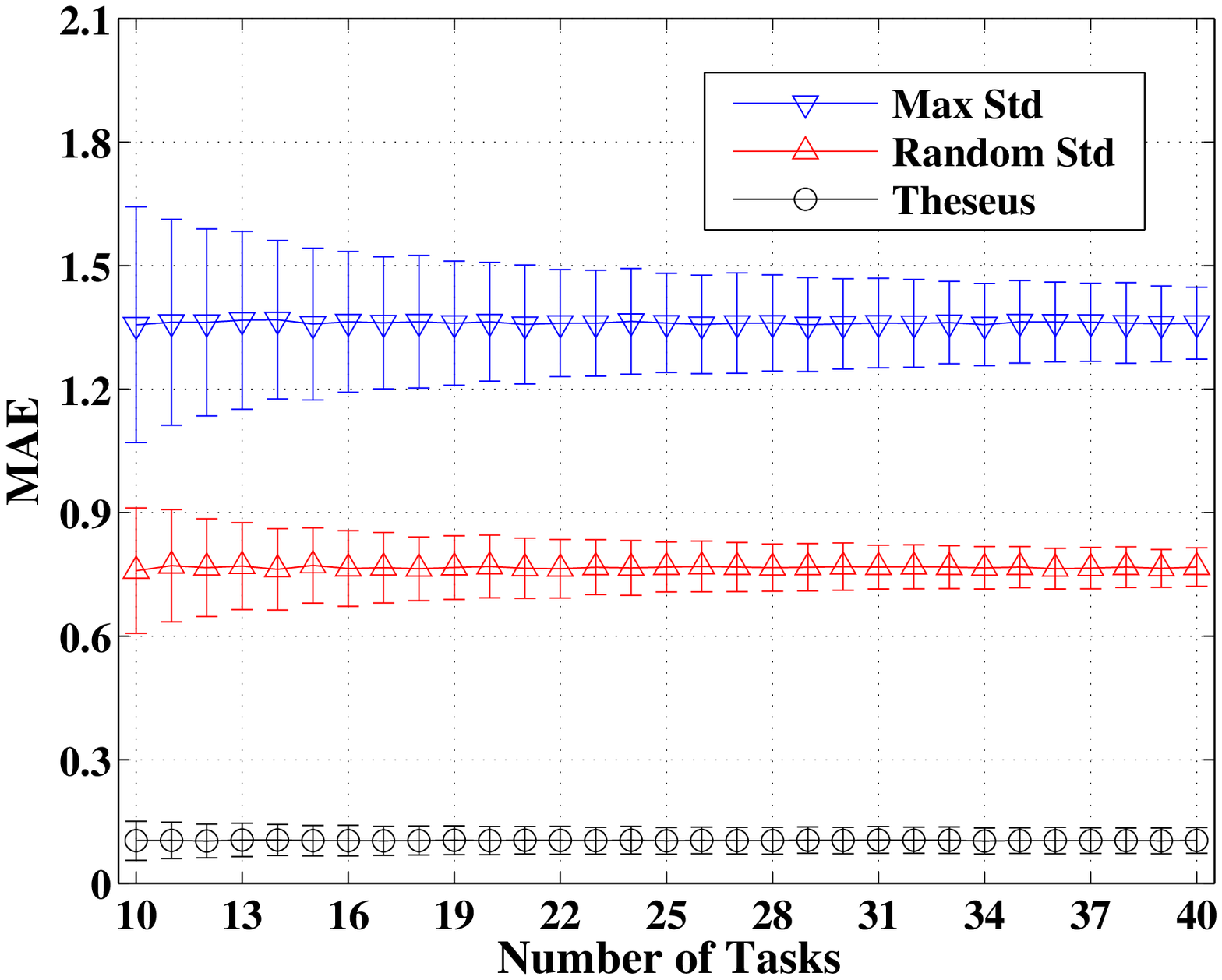}\\
\vspace{-0.45cm}
\caption{MAE comparison (setting II)}
\label{fig:MAEFixWorkerC}
\end{minipage}
\end{figure}

\vspace{-0.05cm}
In Figure \ref{fig:MAEFixTaskC} and \ref{fig:MAEFixWorkerC}, we plot the means and standard deviations of the MAEs corresponding to Max Std, Random Std, and Theseus for setting I and II of the complete information scenario. From these two figures, it is easily observable that the means and standard deviations of the MAEs that correspond to Theseus are far less than those that correspond to Max Std and Random Std, which is because Theseus incentivizes workers to exert their maximum amount of effort, so that the standard deviation between each worker's data and the ground truths is minimized. Note that, in Figure \ref{fig:MAEFixTaskC}, the mean of MAE largely decreases as the number of workers increase, because more data that are close to the ground truths will be inputted to CRH with more number of workers. Figure \ref{fig:MAEFixTaskIC} and \ref{fig:MAEFixWorkerIC} demonstrate similar trends for the MAEs in setting III and IV of the incomplete information scenario. 

Basically, Figure \ref{fig:MAEFixTaskC}-\ref{fig:MAEFixWorkerIC} indicate collectively that a truth discovery algorithm will return rather inaccurate aggregated results, when a vast majority of the participating workers provide unreliable data. Therefore, our Theseus payment mechanism is highly necessary in order to achieve high aggregation accuracy, even though the platform aggregates workers' data using a state-of-the-art truth discovery algorithm. 

\begin{figure}[h]
\begin{minipage}{.49\linewidth}
\centering
\includegraphics[width=.98\textwidth]{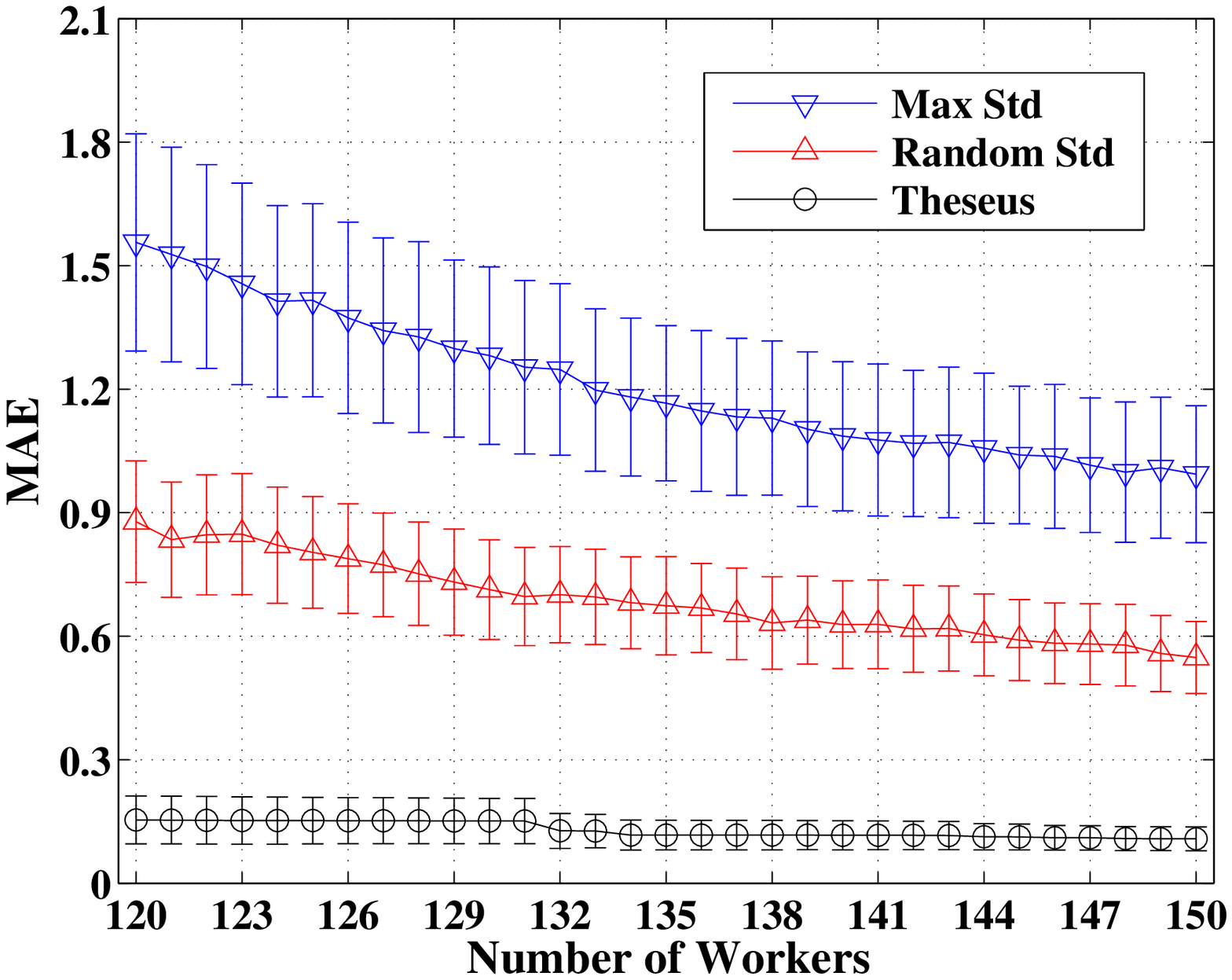}\\
\vspace{-0.45cm}
\caption{MAE comparison (setting III)}
\label{fig:MAEFixTaskIC}
\end{minipage}
\begin{minipage}{.49\linewidth}
\centering
\includegraphics[width=.98\textwidth]{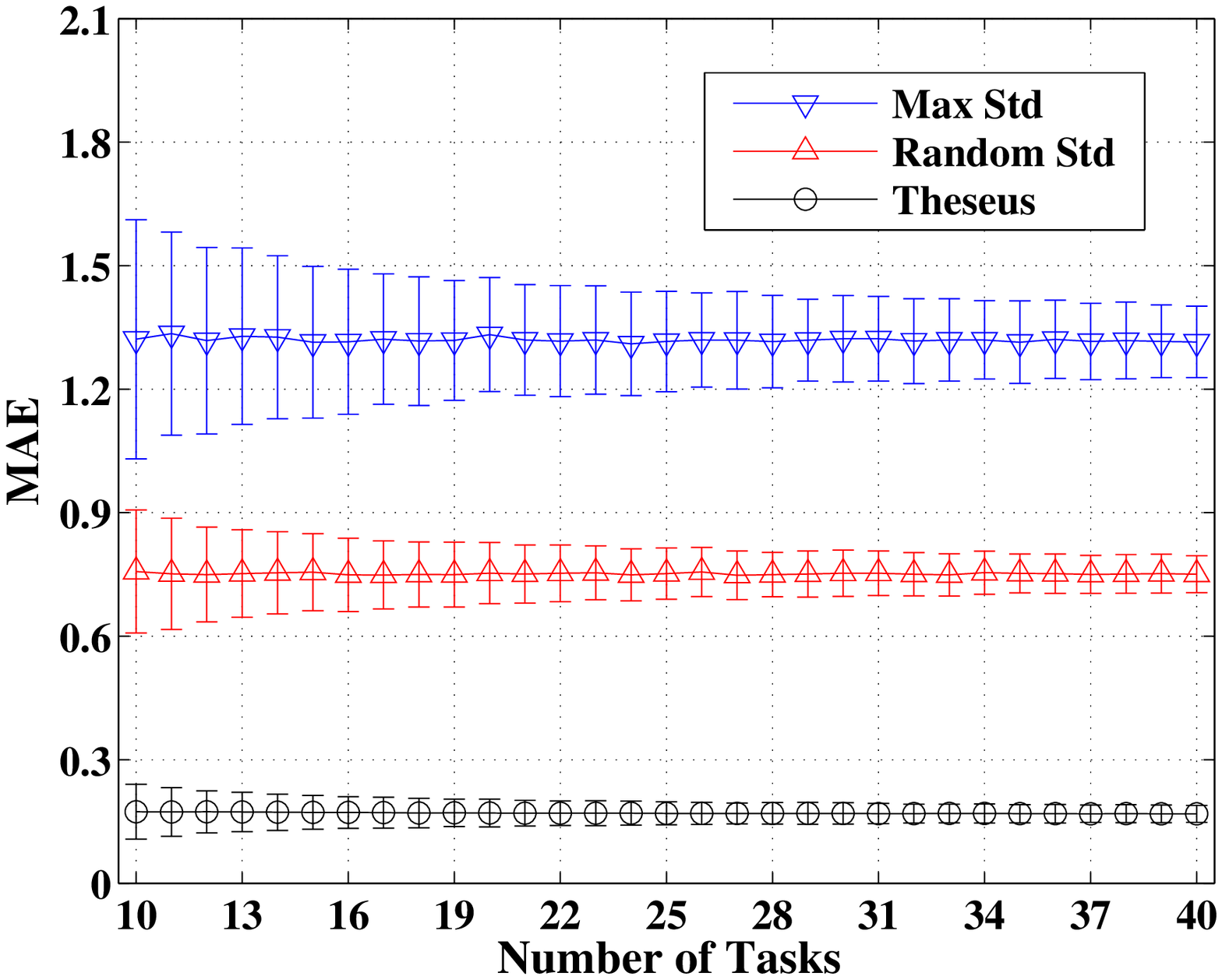}\\
\vspace{-0.45cm}
\caption{MAE comparison (setting IV)}
\label{fig:MAEFixWorkerIC}
\end{minipage}
\end{figure}



\section{Conclusion}

In conclusion, in this paper, we propose a payment mechanism, called Theseus, which is used in pair with a truth discovery algorithm to ensure high aggregation accuracy in MCS systems where workers may strategically reduce their sensing effort. Theseus tackles workers' strategic behavior, and incentivizes workers to spend their maximum possible effort at the BNE of the induced sensing game among workers. Furthermore, we ensure that Theseus bears other desirable properties, including individual rationality and budget feasibility. The desirable properties of Theseus are validated through theoretical analysis, and extensive simulations. 

\bibliographystyle{IEEEtran}
\bibliography{reference}
\end{document}